\documentclass[12pt,a4paper]{article}
\usepackage{amsfonts,amsmath,amssymb,fullpage,graphicx,mathrsfs}
\usepackage{algorithm,algorithmic,bm,enumerate}

\newtheorem{corollary}{Corollary}[section]
\newtheorem{definition}[corollary]{Definition}
\newtheorem{example}[corollary]{Example}
\newtheorem{lemma}[corollary]{Lemma}

\newtheorem{proposition}[corollary]{Proposition}

\newtheorem{theorem}[corollary]{Theorem}

\newcommand{\pmat}[1]{\begin{pmatrix}#1\end{pmatrix}}

\newcommand{\remove}[1]{}
\newcommand{\qed}{\rule{.07in}{.1in}}
\newenvironment{proof}{\vspace{1ex}\noindent\textbf{Proof}\hspace{0.5em}}{\hfill\qed\vspace{1ex}}

\title{Multidimensional Fibonacci Coding}
\author{Perathorn Pooksombat, Patanee Udomkavanich, and Wittawat Kositwattanarerk
\thanks{P. Pooksombat and W. Kositwattanarerk are with the Department of Mathematics, Faculty of Science, Mahidol University, Bangkok 10400, Thailand and the Centre of Excellence in Mathematics, the Commission on Higher Education, Bangkok 10400, Thailand (e-mail: perathorn.pok@student.mahidol.edu and wittawat.kos@mahidol.edu). P. Udomkavanich is with the Department of Mathematics and Computer Science, Faculty of Science, Chulalongkorn University, Bangkok 10330, Thailand (email: pattanee.u@chula.ac.th). The research of W. Kositwattanarerk for this work is supported by the Thailand Research Fund under Research Grant MRG6180192.}}
\date{\today}

\begin{document}

\maketitle

\begin{abstract}
Fibonacci codes are self-synchronizing variable-length codes that are proven useful for their robustness and compression capability. Asymptotically, these codes provide better compression efficiency as the order of the underlying Fibonacci sequence increases, but at the price of the increased suffix length. We propose a circumvention to this problem by introducing higher-dimensional Fibonacci codes for integer vectors. In the process, we provide extensive theoretical background and generalize the theorem of Zeckendorf to higher order. As thus, our work unify several variations of Zeckendorf's theorem while also providing new grounds for its legitimacy.

\end{abstract}

Keywords: Fibonacci code, Zeckendorf's theorem, prefix code, data compression, $\mathbb{Z}$-module, Gaussian integers

\section{Introduction}

A fundamental idea in data compression is to assign shorter codes to symbols or groups of symbols that appear more often in the source data. This results in what is called variable-length code, and one of the earliest practical compression code is the Huffman codes \cite{H}. This code approaches theoretical compression limit under certain scenarios, and has recently been amended to create Tagged Huffman Codes \cite{MNZB} and End-Tagged Dense Codes \cite{BINP} to accommodate database search. \\

Popular alternatives to these codes are codes with suffix delimiters, with the most well-known ones being Fibonacci codes. The scheme was introduced by \cite{FK} and generalized to higher order in \cite{AF}. In terms of data compression, Fibonacci codes are optimal under some distributions and can be used as an alternative to Huffman codes \cite{FK96,S}. Fibonacci coding is a result of binary numeration system for the integers known as Zeckendorf representation. This makes Fibonacci codes suitable for integer data, and equips them with fast encoding and decoding algorithms \cite{WKP,WKBPS}. In addition, compared with other variable-length codes for the integers such as logarithmic ramp \cite{ER} and Elias codes \cite{E}, Fibonacci codes provide much better resistance against insertion and deletion errors. Other applications of Fibonacci sequence in information science include the study of Morse code as a monoid \cite{C,Sa} and Wavelet trees \cite{KlS}. \\

While suffix codes have been shown to offer adjustability to some extent \cite{AZ}, a great burden for codes with suffix delimiters to bear is its own suffix. Longer suffix gives more flexibility in the coded data, hence yielding denser codes, but obviously the codewords now have to carry longer appendage. In this paper, we introduce Fibonacci coding for a sequential string of integers. In the proposed scheme, a constituent of integers is encoded together with a single delimiter, thus making Fibonacci codes of higher order more practical. \\

To generalize Fibonacci codes to the full extent, we reexamine their origin, which is the theorem of Zeckendorf. This much-celebrated theorem states that every positive integer can be written uniquely as
\[F_{i_1}+F_{i_2}+\ldots+F_{i_j}\]
where $F_i$'s are Fibonacci numbers and $i_1,i_2-i_1,\ldots,i_j-i_{j-1}\geq2$. In other words, one may represent a positive integer as sum of nonconsecutive Fibonacci numbers. This theorem has been generalized and investigated in many directions; similar results can be stated for negaFibonacci sequence \cite{B,K}, generalized Fibonacci sequence of order $k$ \cite{CHS,U}, and linear recurrence sequences \cite{DDKMMV,GTNP}. The distribution of the number of terms in a Zeckendorf decomposition is studied in \cite{BDEMMTW,L,MW} where a probabilistic approach has recently been applied \cite{BaM}. \\


Noninteger Fibonacci sequences have been studied mostly on complex numbers \cite{Be,G}. In particular, Jordan \cite{J} and Harman \cite{Ha} consider Fibonacci sequences whose initial terms and indices are Gaussian integers. Other generalizations usually involve an identity whose integer terms yield the classical Fibonacci numbers. For example, Binet's formula is exploited in \cite{Hal,P} where a Fibonacci function is defined over the real and complex numbers. \\


In this paper, we are interested in a generalization of Zeckendorf's theorem in its purest form: we consider \textit{any} mathematical sequences that satisfy the generalized Fibonacci recurrence relation and examine elements that can be written as a sum of terms from the sequence.
As a result, mathematical spaces with minimal structure that meet our requirement are the free $\mathbb{Z}$-modules. Loosely speaking, they are spaces of vectors with integer components. This choice of spaces allows our generalizations to encompass, for example, integer vectors, arrays, or blocks of integers of fixed length; Gaussian and Eisenstein integers; lattices; and the ring $\mathbb{Z}[\alpha]$ where $\alpha$ is algebraic. \\


We summarize the contributions of this paper as follows.
\begin{itemize}
\item We introduce the notion of $k$-equivalent sequences. This approach technically treats elements that can be generated from a Fibonacci sequence as a number system using sequence elements as a basis. This key mechanism allows us to manipulate the ``digits'' without an explicit knowledge of the underlying Fibonacci sequence. We prove that an element has a Zeckendorf decomposition if and only if it can be written as a finite sum (with multiplicities) of sequence elements.
\item We provide a sufficient and necessary condition for Zeckendorf's theorem to hold in a free $\mathbb{Z}$-module. Obviously, the classical Zeckendorf's theorem can be viewed as a case of our generalization. We also give a sufficient condition that makes the representation unique.
\item We propose Fibonacci coding for free $\mathbb{Z}$-modules. This is the first work that makes possible native Fibonacci coding for non-integers. Not only that the resulting codes inherit robustness property from the standard Fibonacci codes, they excel in terms of compression efficiency. Natural and explicit encoding and decoding algorithm are also given.
\end{itemize}

The remainder of this paper is organized as follows. Section \ref{sec:def} sets the definitions that will be used throughout the paper. In Section \ref{sec:equi}, we prove several useful results concerning $k$-equivalent sequences. Generalizations of Zeckendorf's theorem are given in Section \ref{sec:main}. We establish multidimensional Fibonacci coding and discuss its efficiency in Section \ref{sec:code} and conclude in Section \ref{sec:conclusion}.

\section{Definitions}\label{sec:def}

We first give a definition for free modules and Fibonacci sequences of higher order. A free $\mathbb{Z}$-module is a module over $\mathbb{Z}$ with a basis. Namely, $M$ is a free $\mathbb{Z}$-module of rank $l$ if it is a group under addition and
\[M=\alpha_1\mathbb{Z}\oplus\alpha_2\mathbb{Z}\oplus\ldots\oplus\alpha_l\mathbb{Z}\]
for some $\alpha_1,\alpha_2,\ldots,\alpha_l$ that are integrally independent, meaning that the only integer solution to the equation $n_1\alpha_1+n_2\alpha_2+\ldots+n_l\alpha_l=0$ is $n_1=n_2=\ldots=n_l=0$. The elements $\alpha_1,\alpha_2,\ldots,\alpha_l$ are called a basis for $M$, and every element of $M$ can be written uniquely as $n_1\alpha_1+n_2\alpha_2+\ldots+n_l\alpha_l$ where $n_1,n_2,\ldots,n_l\in\mathbb{Z}$. For $n\in\mathbb{Z}^+$ and $m\in M$, we write $nm$ to mean $\underbrace{m+m+\ldots+m}_{n\textrm{ times}}$. Note that multiplication between module elements may or may not to be defined. \\



We are interested in an extremely broad version of a Fibonacci sequence, so we adopt only the fundamental recurrence relation and a corresponding generalization of a Zeckendorf representation.

\begin{definition}
A \textit{Fibonacci sequence of order $k$} is a doubly infinite sequence $(F_r)_{r\in\mathbb{Z}}=(\ldots,F_{-1},F_0,F_1,\ldots)$ where
\[F_{n-k}+\ldots+F_{n-2}+F_{n-1}=F_n\]
for all integer $n$.
\end{definition}

Here, a doubly infinite sequence (also called two-way infinite sequence) is a sequence that goes in both directions.

\begin{definition}
Let $(F_r)_{r\in\mathbb{Z}}$ be a Fibonacci sequence of order $k$. An element $m$ is \textit{Zeckendorf in $(F_r)$} if it can be written as a finite sum of elements in $(F_r)$ with no $k$ consecutive terms. A set $M$ is Zeckendorf in $(F_r)$ if every element of $M$ is.
\end{definition}


Note that we do not specify the initial condition for a Fibonacci sequence. In fact, we will reverse engineer the theorem of Zeckendorf and identify all initial conditions that the theorem holds. In other words, instead of proving a version of Zeckendorf's theorem for a particular sequence, we set forth to classify \textit{all} sequences that are in favor of the theorem. \\

A Fibonacci sequence of order $k$ has degree of freedom $k$; knowing any $k$ consecutive terms is sufficient to generate the entire sequence. For $k\geq2$, the characteristic polynomial of a Fibonacci sequence of order $k$ is $x^k-x^{k-1}-x^{k-2}-\ldots-x-1$. It is known \cite{M} that this polynomial has $k$ distinct roots $\lambda_k,\lambda_{k,2},\ldots,\lambda_{k,k}$ where $\lambda_k\in(1,2)$ and $\lambda_{k,2},\ldots,\lambda_{k,k}$ lie in the unit circle. Note that the sequence $(\lambda_k^r)_{r\in\mathbb{Z}}$ is a Fibonacci sequence of order $k$. We call this sequence \textit{primitive}.

\begin{example}\label{ex:basic}
In this example we consider several Fibonacci sequences of order 3.
\begin{enumerate}[i.)]
\item The usual Tribonacci sequence is defined using $F_0=F_1=1$, $F_2=2$, and $F_{n-3}+F_{n-2}+F_{n-1}=F_n$. The first few terms of this sequence are
\[\begin{array}{rrrrrrrrrrrr}
1, & 1, & 2, & 4, & 7, & 13, & 24, & 44, & 81, & 149, & 274, & \ldots.
\end{array}\]
The two-way generalization of this sequence is given by
\[\begin{array}{rrrrrrrrrrrr}
\ldots, & 5, & -8, & 4, & 1, & -3, & 2, & 0, & -1, & 1, & 0, & 0, \\
\end{array}\]
\[\begin{array}{rrrrrrrrrrrr}
1, & 1, & 2, & 4, & 7, & 13, & 24, & 44, & 81, & 149, & 274, & \ldots.
\end{array}\]
\item The three roots of a polynomial $x^3-x^2-x-1$ are $\lambda_3\approx1.839$, $\lambda_{3,2}\approx-0.42-0.606i$, and $\lambda_{3,3}\approx-0.42+0.606i$. The primitive Fibonacci sequence of order 3 is the sequence $(\lambda_3^r)_{r\in\mathbb{Z}}$, which is approximately
\[\begin{array}{rrrrrrrrrrr}
\ldots, & 0.161, & 0.296, & 0.544, & 1, & 1.839, & 3.383, & 6.222, & 11.445, & 21.05, & \ldots.
\end{array}\]
\item A Fibonacci sequence of order 3 where $F_0=0$, $F_1=1+i$, and $F_2=2+i$ is given by
\[\begin{array}{rrrrrrrrr}
\ldots, & 5+i, & -2-3i, & -1+2i, & 2, & -1-i, & i, & 1, & \\
& 0, & 1+i, & 2+i, & 3+2i, & 6+4i, & 11+7i, & 20+13i, & \ldots.
\end{array}\]
Here, the element of this sequence are Gaussian integers, i.e., complex numbers of the form $a+bi$ where $a,b\in\mathbb{Z}$. The element $9+5i$, for example, is Zeckendorf in this sequence since $9+5i=(5+i)+(-1+2i)+(2)+(3+2i)$.
\end{enumerate}
\end{example}

In order to study elements that are Zeckendorf in $(F_r)$, it is natural to consider elements of the form $(x_r)\cdot(F_r)$ where $(x_r)$ is a doubly infinite sequence of integers and $\cdot$ is an infinite dot product, i.e.
\[(x_r)\cdot(F_r)=\sum_{r\in\mathbb{Z}}{x_rF_r}.\]
Roughly speaking, a collection of all elements of the form $(x_r)\cdot(F_r)$ is a number system with the base $(F_r)$ with the digit set $\mathbb{Z}$. Here, $(x_r)$ acts like digits or coefficients for the basis $(F_r)$. These so-called coefficients can be manipulated using the following methods.

\begin{definition}\label{def:eq}
Doubly infinite sequences $(x_r)$ and $(y_r)$ are \textit{$k$-equivalent}, denoted $(x_r)\sim_k(y_r)$, if $(y_r)$ can be obtained from $(x_r)$ using finite applications of the following two operations. \\
\textit{Operation 1}: For some $n\in\mathbb{Z}$, subtract 1 from $x_n$ and add 1 to all of $x_{n-k}\ldots,x_{n-2},x_{n-1}$. \\
\textit{Operation 2}: For some $n\in\mathbb{Z}$, add 1 to $x_n$ and subtract 1 from all of $x_{n-k}\ldots,x_{n-2},x_{n-1}$. \\
\end{definition}

It is not hard to see that \textit{Operations 1} and \textit{2} ``cancel out'', and $\sim_k$ is an equivalence relation. These operations also preserve the value of the dot product: if $(F_r)$ is a Fibonacci sequence of order $k$ and $(x_r)\sim_k(y_r)$, then $(x_r)\cdot(F_r)=(y_r)\cdot(F_r)$. As we will see later, this observation is an important ingredient of our results. Finally, we give a lexicographical order $\succ$ to the doubly infinite sequences that are zero almost everywhere, i.e., $(x_r)_{r\in\mathbb{Z}}\succ(y_r)_{r\in\mathbb{Z}}$ if the entry of $(x_r)$ is larger than that of $(y_r)$ at the rightmost index where they are not equal.

\section{$k$-equivalent Sequences}\label{sec:equi}

In a way, our approach in proving a generalized theorem of Zeckendorf is in asking the following converse question: given a Fibonacci sequence $(F_r)$ of order $k$, which elements can be written as a finite sum of elements from $(F_r)$ using no $k$ consecutive terms? It is not hard to see that they are elements of the form $(y_r)\cdot(F_r)$ where $(y_r)$ is binary, is 0 almost everywhere, and has no $k$ consecutive 1's. Now, instead of characterizing these elements directly, we will focus on manipulating the coefficients $(y_r)$ using \textit{Operations 1} and \textit{2} from Definition \ref{def:eq}. These coefficient manipulations operate independently of the underlying Fibonacci sequence $(F_r)$, and so this approach frees us from having to worry about the choice of $(F_r)$. As we will see, our results hold in a broad sense and are not limited to just any specific Fibonacci sequence. Propositions \ref{prop:keq1} and \ref{prop:keq2} given in this section will provide an effective mechanism for identifying sequences that are $k$-equivalent to a binary sequence with the required properties.

\begin{proposition}\label{prop:keq1}
For any sequence $(x_r)_{r\in\mathbb{Z}}$ of integers that is 0 almost everywhere, there exists a sequence $(y_r)_{r\in\mathbb{Z}}$ such that $(x_r)\sim_k(y_r)$, and $(y_r)_{r\in\mathbb{Z}}$ is either
\begin{itemize}
\item zero everywhere,
\item positive at some $k$ consecutive terms and zero elsewhere, or
\item negative at some $k$ consecutive terms and zero elsewhere.
\end{itemize}
In addition, if $(x_r)_{r\in\mathbb{Z}}\neq(0)_{r\in\mathbb{Z}}$ is nonnegative, then the $k$ consecutive nonzero terms of $(y_r)_{r\in\mathbb{Z}}$ are positive.
\end{proposition}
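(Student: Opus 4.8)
The plan is to induct on a suitable nonnegativity-aware "size" of the sequence $(x_r)$ and repeatedly push mass around via \textit{Operations 1} and \textit{2} until the support collapses into a single window of $k$ consecutive indices. First I would reduce to the nonnegative case and then handle the general case: given arbitrary $(x_r)$ that is $0$ almost everywhere, write $x_r = x_r^+ - x_r^-$ and observe that it suffices to treat $(x_r^+)$ and $(x_r^-)$ separately, since if $(x_r^+)\sim_k(y_r)$ with $(y_r)$ supported and positive on some window, and similarly $(x_r^-)\sim_k(z_r)$, then $(x_r)\sim_k(y_r-z_r)$; one then argues that any such difference, after further operations, must itself be zero, all-positive on a window, or all-negative on a window. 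Actually it is cleaner to prove the stated refinement first — \emph{if $(x_r)\neq(0)$ is nonnegative, then it is $k$-equivalent to a sequence that is positive on exactly $k$ consecutive terms} — and derive the general statement as a corollary by the $x^+/x^-$ splitting together with one more normalization argument.

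For the nonnegative case, let $[a,b]$ be the smallest interval containing the support of $(x_r)$ (so $x_a,x_b>0$ and $x_r=0$ outside $[a,b]$). If $b-a<k$, I can spread things out: apply \textit{Operation 2} at a sufficiently large index to increase $x_n$ at the right end and decrease the $k$ entries below it; but since those entries are currently $0$ or positive we must be careful not to create negatives. The right move is instead: if the window is too short, apply \textit{Operation 1} at index $a+k$ (which is above the support), subtracting $1$ from $x_{a+k}$ — taking it to $-1$ — and adding $1$ to $x_a,\dots,x_{a+k-1}$; this is bad because of the $-1$. So the correct induction is on the quantity $N=\sum_r x_r \cdot w(r)$ for a carefully chosen weight, or more robustly on $\sum_r x_r$ together with the support length, using \textit{Operation 2} at the top of the support to \emph{reduce} $\sum x_r$ (replacing $k$ ones by a single one further right). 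Concretely: while there exist $k$ consecutive indices $n-k,\dots,n-1$ all with positive entries, apply \textit{Operation 2} there; this strictly decreases $\sum_r x_r$ and keeps the sequence nonnegative, so it terminates. When it halts, the nonnegative sequence $(y_r)$ has no $k$ consecutive positive entries. If $(y_r)\neq(0)$, pick the smallest index $a$ with $y_a>0$; I claim we can now run the reverse-type moves to \emph{compress} the support into one window. Here is where I'd use \textit{Operation 1} going downward from the top: repeatedly, if the topmost nonzero entry is at index $b$ and there is a gap (some index in $(a,b)$ with a zero) or $b-a\ge k$, apply \textit{Operation 1} at $b$, converting one unit at $b$ into units at $b-k,\dots,b-1$; this moves mass leftward and downward and strictly decreases $b$ or fills a gap, so after finitely many steps the support lies in a window of length $\le k$, all entries positive, and a final \textit{Operation 1} at the top (if the window has length $<k$) pushes it to length exactly $k$ — but again watch signs, so one does a balancing act between \textit{Operation 1} to lengthen and keeping positivity, which works because each entry is at least $1$.

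The main obstacle I anticipate is exactly the sign bookkeeping in the compression phase: \textit{Operation 1} and \textit{Operation 2} each decrease $k$ entries by $1$, so naively applying them can drive an entry negative, and one must order the moves so that every entry touched is $\ge 1$ at the moment it is decremented. I expect the clean way around this is to use a two-phase monovariant: Phase 1 uses \textit{Operation 2} (carry propagation) to drive $\sum_r x_r$ down to its minimum over the $k$-equivalence class while preserving nonnegativity — showing nonnegativity is preserved requires checking that we only ever apply \textit{Operation 2} where all $k$ lower entries are $\ge 1$ — and Phase 2 uses \textit{Operation 1} at the top of the support, where the single decremented entry is the maximal nonzero one and hence $\ge 1$, to migrate the support into a single length-$k$ window; termination of Phase 2 follows because the pair (top index, support length) decreases lexicographically, or because each application reduces a potential like $\sum_r r\,x_r$. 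The trichotomy for general (possibly-signed) $(x_r)$ then follows: split into positive and negative parts, normalize each to a length-$k$ positive (resp. negative) window, subtract, and observe that the resulting sequence is $0$ almost everywhere with bounded support; a final application of Proposition-style normalization (or a direct check, since the two windows can be translated to overlap or separate) yields one of the three stated forms, with the sign determined by whichever part dominates — and uniqueness of the form is not claimed here, so no further work is needed.
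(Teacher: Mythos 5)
Your treatment of the nonnegative case is essentially sound and coincides with the paper's: repeatedly applying \textit{Operation 1} at the rightmost nonzero (hence positive) entry preserves nonnegativity and strictly shrinks the support interval, and the final elimination deposits a positive amount on all $k$ positions of the terminal window, so that window comes out strictly positive. (Your Phase 1 is unnecessary, and your proposed monovariant for Phase 2 is not quite right---applying \textit{Operation 1} at the top index $b$ can lengthen the support when the left end of the support lies above $b-k$---but the length of the support interval does strictly decrease each time the top entry is fully eliminated, so termination is not in doubt.)

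The genuine gap is the general signed case, which is the real content of the proposition. Writing $x=x^+-x^-$ and normalizing each part to a positive length-$k$ window is legitimate ($\sim_k$ is compatible with differences, since two sequences are $k$-equivalent exactly when their difference lies in the lattice generated by the move vectors $e_n-e_{n-1}-\cdots-e_{n-k}$), but it only reduces the problem to: show that the difference of two positive length-$k$ windows is $k$-equivalent to one of the three stated forms. That is the original problem again, restricted to supports of size at most $2k$, and your proposed resolution---``translate the windows to overlap or separate'' and ``the sign is determined by whichever part dominates''---is not backed by any mechanism: \textit{Operations 1} and \textit{2} do not translate windows, and once entries of both signs are present the elimination process genuinely oscillates in sign. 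For $k=2$, the sequence with $x_0=-1$, $x_1=1$ eventually normalizes to an all-positive window, while $x_0=-2$, $x_1=1$ normalizes to an all-negative one; nothing in your argument decides, or even guarantees, which outcome occurs. The paper closes exactly this gap with a spectral argument: once the support is compressed to $k$ consecutive positions, eliminating the rightmost entry acts on the window vector $\boldsymbol{\gamma}$ as multiplication by the transpose $A$ of the Fibonacci companion matrix, and since $A$ has one eigenvalue $\lambda_1\in(1,2)$ with a strictly positive eigenvector while all other eigenvalues have modulus less than $1$, the iterates $A^n\boldsymbol{\gamma}$ are eventually all positive or all negative according to the sign of the $\lambda_1$-component of $\boldsymbol{\gamma}$, and vanish only if $\boldsymbol{\gamma}=0$. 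Some such quantitative input (a dominating positive direction) is indispensable here, and your proposal has no substitute for it.
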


\begin{proof}
We keep applying either \textit{Operation 1} or \textit{2} from Definition \ref{def:eq} to eliminate the rightmost nonzero term of $(x_r)$. Since $(x_r)$ is 0 almost everywhere, we will eventually obtain a sequence $(y_r)_{r\in\mathbb{Z}}$ which is 0 everywhere except for some $k$ consecutive terms. If the $k$ consecutive terms of $(y_r)$ are all positive, all negative, or all zero, then we are done. Otherwise, we can keep eliminating the rightmost nonzero term of $(y_r)$ and obtain yet another sequence that is 0 everywhere except for some $k$ consecutive terms. We will prove that this process will eventually yield a sequence with $k$ consecutive terms that are either all positive or all negative. \\

We initialize $(y_{0,r})=(y_r)$ and collect the $k$ nonzero terms of $(y_{0,r})$ into a $1\times k$ vector $\boldsymbol{\gamma}_0=(\gamma_1,\gamma_2,\ldots,\gamma_k)$. The rightmost term of $(y_{0,r})$ is $\gamma_k$. Using either \textit{Operation 1} or \textit{2}, we eliminate this term an obtain $(y_{1,r})$ whose nonzero entries are given by $\boldsymbol{\gamma}_1=(\gamma_k,\gamma_1+\gamma_k,\gamma_2+\gamma_k,\ldots,\gamma_{k-1}+\gamma_k)$. In other words, we have
\[\boldsymbol{\gamma}_1^\top=A\boldsymbol{\gamma}_0^\top\]
where
\[A=\pmat{
0 & 0 & 0 & \cdots & 0 & 1 \\
1 & 0 & 0 & \cdots & 0 & 1 \\
0 & 1 & 0 & \cdots & 0 & 1 \\
\vdots & \vdots & \vdots & \ddots & \vdots & \vdots \\
0 & 0 & 0 & \cdots & 1 & 1}.\]
Note that this is independent of the value of $\gamma_k$. We repeat this process, and it is not hard to see that $(y_{0,r})$ is $k$-equivalent to $(y_{n,r})$ for all positive integer $n$. In addition, the nonzero terms of $(y_{n,r})$ are given by the entries of $\boldsymbol{\gamma}_n^\top=A^n\boldsymbol{\gamma}_0^\top$. Thus, it remains to show that there is an $n$ for which $A^n\boldsymbol{\gamma}_0^\top$ is either positive or negative. \\

It should not be surprising that $A$ is the transpose of the usual Fibonacci matrix. Thus, $A$ has characteristic equation $x^k-x^{k-1}-x^{k-2}-\ldots-x-1$ and eigenvalues $\lambda_k\in(1,2)$ and $\lambda_{k,2},\ldots,\lambda_{k,k}$ whose complex norms are less than 1 \cite{M}. We denote by $\bm{v}_i$ eigenvector corresponding to the eigenvalue $\lambda_i$. Note that
\[\bm{v}_1=\pmat{\lambda_k^{k-2} \\
\lambda_k^{k-2}+\lambda_k^{k-3} \\
\lambda_k^{k-2}+\lambda_k^{k-3}+\lambda_k^{k-4} \\
\vdots \\
\lambda_k^{k-2}+\lambda_k^{k-3}+\cdots+\lambda_k+1 \\
\lambda_k^{k-1}}\]
is real and positive. Now, one may diagonalize $A$ as $CDC^{-1}$ where $D=\mathsf{diag}(\lambda_k,\lambda_{k,2},\ldots,\lambda_{k,k})$ and $C=\pmat{\bm{v}_1 & \cdots & \bm{v}_k}$. It follows that
\begin{align*}
\boldsymbol{\gamma}_n^\top & =A^n\boldsymbol{\gamma}_0^\top \\
& =CD^nC^{-1}\boldsymbol{\gamma}_0^\top \\
& =\pmat{\lambda_k^n\bm{v}_1 & \lambda_{k,2}^n\bm{v}_1 & \cdots & \lambda_{k,k}^n\bm{v}_k}C^{-1}\boldsymbol{\gamma}_0^\top \\
& =\lambda_k^nc_1\bm{v}_1+\lambda_{k,2}^nc_2\bm{v}_2+\ldots+\lambda_{k,k}^nc_k\bm{v}_k
\end{align*}
where $(c_1,\ldots,c_k)^\top=C^{-1}\boldsymbol{\gamma}_0^\top$. Since $|\lambda_k|>1$, $|\lambda_{k,2}|,\ldots,|\lambda_{k,k}|<1$, and $\bm{v}_1$ is real and positive, for a sufficiently large $n$ we will have $\boldsymbol{\gamma}_n^\top\approx\lambda_k^nc_1\bm{v}_1$, and so the entries of $\boldsymbol{\gamma}_n^\top$ will either be all positive or all negative, depending on the sign of $c_1$. If $c_1=0$, then $\boldsymbol{\gamma}_n^\top=\mathbf{0}$, implying that $\boldsymbol{\gamma}_0^\top=\mathbf{0}$. \\

For the last part of the proposition, if $(x_r)_{r\in\mathbb{Z}}\neq(0)_{r\in\mathbb{Z}}$ is nonnegative, then we only need to use \textit{Operation 1}, and so the $k$ consecutive terms of $(y_r)_{r\in\mathbb{Z}}$ cannot be negative or zero.
\end{proof}

\begin{proposition}\label{prop:keq2}
For any sequence $(x_r)_{r\in\mathbb{Z}}$ of nonnegative integers that is 0 almost everywhere, there exists a sequence $(y_r)_{r\in\mathbb{Z}}$ of 0 and 1 with no $k$ consecutive 1's such that $(x_r)\sim_k(y_r)$.
\end{proposition}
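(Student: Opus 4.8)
The plan is to reduce to a normal form using Proposition \ref{prop:keq1} and then grind the coefficients down to $\{0,1\}$ with two reduction rules built out of Operations $1$ and $2$. By Proposition \ref{prop:keq1} we may assume $(x_r)$ is either identically $0$ (nothing to prove) or positive on $k$ consecutive indices and $0$ elsewhere. The two rules are the \emph{carry}, which is Operation $2$ itself and encodes the relation $F_{n-k}+\cdots+F_{n-1}=F_n$ (it replaces $k$ consecutive coefficients that are all $\ge 1$ by a single coefficient one index higher), and the \emph{split}, which is Operation $2$ at index $n+1$ followed by Operation $1$ at index $n$ and encodes $2F_n=F_{n+1}+F_{n-k}$ (it turns a coefficient $\ge 2$ at index $n$ into $+1$ at indices $n+1$ and $n-k$). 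Starting from the normal form I would repeatedly carry the highest available run of $k$ consecutive positive coefficients and split the highest coefficient that is $\ge 2$; since a sequence with no coefficient $\ge 2$ and no run of $k$ consecutive positive coefficients is precisely a $\{0,1\}$-sequence with no $k$ consecutive $1$'s, it suffices to show this process stops.

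For termination I would use two invariants. First, Operations $1$ and $2$ preserve the dot product with any order-$k$ Fibonacci sequence, so $\Phi:=(x_r)\cdot(\lambda_1^r)$ (with $\lambda_1\in(1,2)$ the dominant root) is constant throughout; because the coefficients stay nonnegative — a carry only applies Operation $2$ where it preserves nonnegativity, a split leaves all coefficients $\ge 0$ — and $\lambda_1>1$, the support of every sequence that occurs lies in $(-\infty,M]$ for the fixed $M$ with $\lambda_1^M\le\Phi<\lambda_1^{M+1}$, and every index carrying a coefficient $\ge 2$ is bounded above by the fixed integer $\lfloor\log_{\lambda_1}(\Phi/2)\rfloor$. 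Second, I would track the potential $\Psi:=\sum_r x_r\rho^r$ for a fixed $\rho\in(1,\lambda_1)$: writing $p(x)=x^k-x^{k-1}-\cdots-1$ for the characteristic polynomial (so $p(\rho)<0$, since $p(1)=1-k<0$, $p(\lambda_1)=0$, and $p$ has no root in between), a short computation gives that a carry at index $m$ changes $\Psi$ by $\rho^{m-k}p(\rho)$ and a split at index $n$ changes it by $\rho^{n-k}(\rho-1)p(\rho)$, so \emph{both} rules strictly decrease the nonnegative quantity $\Psi$. Combined with the observation that $\sum_r x_r$ is nonincreasing (a carry lowers it by $k-1$, a split leaves it fixed), so that only finitely many carries ever occur, this confines the dynamics; once the process halts, the output is the desired $\{0,1\}$-sequence with no $k$ consecutive $1$'s and the chain of carries and splits exhibits the required $k$-equivalence.

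The main obstacle is exactly this termination bookkeeping. The two rules move coefficient mass both upward — toward the fixed ceiling $M$ — and downward (the $F_{n-k}$ term of a split), so no single one of $\sum_r x_r$, $\Psi$, or the support interval is by itself a well-founded measure: $\sum_r x_r$ is only weakly decreasing, $\Psi$ is real-valued with decrements that shrink if splits occur at ever-lower indices, and the support can spread downward. The real work is to combine these — for instance to bound the total number of splits, or to show that starting from the width-$k$ normal form the support cannot spread below a bound determined by $\Phi$ and $\sum_r x_r$, so that the cascade of splits triggered by the finitely many coefficients $\ge 2$ is finite — and to organize the choice of moves so that the whole computation is manifestly a descent in a well-founded order. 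Everything else — checking that the stopping condition yields the claimed form, and that the carry and split are genuine finite compositions of Operations $1$ and $2$ — is routine.
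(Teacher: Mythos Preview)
Your two moves—carry and split—and your use of the invariant $\Phi=(x_r)\cdot(\lambda_1^r)$ to cap the support on the right are exactly the ingredients the paper uses. (The preliminary reduction via Proposition~\ref{prop:keq1} is legitimate but unnecessary; the paper works directly with any nonnegative finitely supported sequence.)

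The gap is precisely the one you flag: you have not proved termination. Your potential $\Psi=\sum_r x_r\rho^r$ does strictly decrease under both moves, but since a split at index $n$ lowers $\Psi$ by only $\rho^{n-k}(\rho-1)|p(\rho)|$ and split indices can march to $-\infty$, the decrements are not bounded away from zero and $\Psi\ge 0$ alone does not force finiteness. The observation that $\sum_r x_r$ bounds the number of carries is correct but leaves the split count uncontrolled, and the paragraph beginning ``The real work is to combine these'' is a list of things that would need to be done, not a proof that any of them can be.

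The paper avoids this entirely by recasting the problem non-constructively. Using the same bound $y_r=0$ for $r\ge N$ (from $\Phi$), it considers the set of \emph{all} nonnegative sequences $k$-equivalent to $(x_r)$ and picks one of maximal lexicographic order $(y_r)$. Then carry and split become one-line contradictions: if some $y_s\ge 2$, the split at $s$ produces a nonnegative $k$-equivalent sequence of strictly higher lex order; if some $y_{s-k+1}=\cdots=y_s=1$, the carry does. Hence the lex-maximal element already has the required form. In other words, lexicographic order \emph{is} the well-founded measure you were searching for—both of your moves strictly increase it, and the right-support bound from $\Phi$ caps how far it can climb—so the ``termination bookkeeping'' evaporates once you argue at the level of the maximal element rather than tracking an explicit cascade.
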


\begin{proof}
Consider all sequences of nonnegative integers that are $k$-equivalent to $(x_r)_{r\in\mathbb{Z}}$. Recall that if $(x_r)\sim_k(y_r)$, then $(x_r)\cdot(\lambda_k^r)=(y_r)\cdot(\lambda_k^r)$ where $(\lambda_k^r)$ is the primitive Fibonacci sequence. Let $N$ be an integer such that $(x_r)\cdot(\lambda_k^r)<\lambda_k^N$. Now, if a sequence of nonnegative integers $(y_r)$ is $k$-equivalent to $(x_r)$, then we must have $(y_r)\cdot(\lambda_k^r)<\lambda_k^N$. Since $(\lambda_k^r)$ is strictly positive, it follows that $y_r=0$ for all $r\geq N$. This allows us to conclude that, among all the sequences of nonnegative integers that are $k$-equivalent to $(x_r)$, there must be one with the highest lexicographical order. We call this sequence $(y_r)_{r\in\mathbb{Z}}$ and claim that it satisfies the conditions required. \\

Suppose that $y_s\geq2$ for some $s\in\mathbb{Z}$. We perform \textit{Operation 1} at $n=s$ and \textit{Operation 2} at $n=s+1$. This results in a sequence with higher lexicographical order than $(y_r)$, contradicting our choice of $(y_r)$. Suppose now that $y_{s-k}=\ldots=y_{s-2}=y_{s-1}=1$ for some $s\in\mathbb{Z}$. Perform \textit{Operation 2} at $n=s$ yields a sequence with higher lexicographical order, and once again that contradicts the choice of $(y_r)$. We conclude that $(y_r)$ consists only of 0 and 1 with no $k$ consecutive 1's.
\end{proof}

We illustrate the transformations given in Propositions \ref{prop:keq1} and \ref{prop:keq2} in an example below.

\begin{example}
We let $k=3$ and consider a doubly infinite sequence $(x_r)$ that is zero everywhere except $x_{-1}=-2$, $x_1=-1$, $x_2=-2$, and $x_4=1$, i.e.,
\[(x_r)=(\ldots,0,0,0,0,-2,0,-1,-2,0,1,\ldots).\]
Here and throughout the example, only indices $-5,-4,\ldots,4$ are shown. We now transform $(x_r)$ into a sequence that is positive at some 3 consecutive terms and zero elsewhere.
\begin{align*}
(\ldots,0,0,0,0,-2,0,-1,-2,0,1,\ldots) & \sim_3(\ldots,0,0,0,0,-2,0,0,-1,1,0,\ldots) \\
& \sim_3(\ldots,0,0,0,0,-2,1,1,0,0,0,\ldots) \\
& \sim_3(\ldots,0,0,0,1,-1,2,0,0,0,0,\ldots) \\
& \sim_3(\ldots,0,0,1,2,0,1,0,0,0,0,\ldots). \\
& \sim_3(\ldots,0,0,2,3,1,0,0,0,0,0,\ldots).
\end{align*}
Here, \textit{Operation 1} is performed at indices 4,3,1,0,0 in order. Next, we transform $(\ldots,0,0,2,3,1,0,0,0,0,0,\ldots)$ into a binary sequence with no 3 consecutive 1's.
\begin{align*}
(\ldots,0,0,2,3,1,0,0,0,0,0,\ldots) & \sim_3(\ldots,0,0,1,2,0,1,0,0,0,0,\ldots) \\
& \sim_3(\ldots,1,1,2,1,0,1,0,0,0,0,\ldots) \\
& \sim_3(\ldots,1,0,1,0,1,1,0,0,0,0,\ldots).
\end{align*}
Here, \textit{Operations 2,1,2} are performed in order at indices $1,-2,-1$ respectively. Consider now a Fibonacci sequence of Gaussian integers
\[(F_r)=(
\begin{array}{rrrrrrrrrrrr}
\ldots, & -1+2i, & 2, & -1-i, & i, & 1, & 0, & 1+i, & 2+i, & 3+2i, & 6+4i, & \ldots.
\end{array})\]
from Example \ref{ex:basic}. Note that all the above 3-equivalent sequences represent the same quantity when multiplied by $(F_r)$. In particular, if $(y_r)=(\ldots,0,0,2,3,1,0,0,0,0,0,\ldots)$ and $(z_r)=(\ldots,1,0,1,0,1,1,0,0,0,0,\ldots)$, then
\[(x_r)\cdot(F_r)=(y_r)\cdot(F_r)=(z_r)\cdot(F_r)=-1+i.\]
This, in fact, would hold for any other Fibonacci sequences $(F_r)$ of order 3.
\end{example}

The last example illustrates a major strength of our approach--each equivalency class represents the same quantity under a given Fibonacci sequence. If one can find a legitimate Zeckendorf representation in an equivalency class, then the element represented by that class has a Zeckendorf decomposition. We finish this section with a powerful corollary to Proposition \ref{prop:keq2} and another example. The next result characterizes all elements that are Zeckendorf in a Fibonacci sequence $(F_r)$.

\begin{corollary}\label{cor:rep}
Let $(F_r)$ be a Fibonacci sequence of order $k$. An element $m$ is Zeckendorf in $(F_r)$ if and only if $m$ can be written as a finite sum (with multiplicities) of elements from $(F_r)$.
\end{corollary}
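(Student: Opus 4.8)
The plan is to prove the equivalence by leveraging the coefficient‑manipulation machinery from Section \ref{sec:equi}. The forward direction is immediate: if $m$ is Zeckendorf in $(F_r)$, then by definition $m$ is a finite sum of sequence elements with no $k$ consecutive terms, which is in particular a finite sum (with multiplicities) of elements from $(F_r)$. So the content is entirely in the converse.

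For the converse, suppose $m=F_{i_1}+F_{i_2}+\cdots+F_{i_j}$ is a finite sum (with multiplicities) of elements from $(F_r)$. I would encode this sum as a coefficient sequence: let $(x_r)_{r\in\mathbb{Z}}$ be the doubly infinite sequence of nonnegative integers where $x_r$ counts how many times $F_r$ appears in the sum. Then $(x_r)$ is nonnegative, $0$ almost everywhere, and $m=(x_r)\cdot(F_r)$. By Proposition \ref{prop:keq2}, $(x_r)$ is $k$-equivalent to a sequence $(y_r)_{r\in\mathbb{Z}}$ of $0$'s and $1$'s with no $k$ consecutive $1$'s. Since \textit{Operations 1} and \textit{2} preserve the dot product with any Fibonacci sequence of order $k$ (as noted after Definition \ref{def:eq}), we get $m=(y_r)\cdot(F_r)=\sum_{r\colon y_r=1}F_r$, which is precisely a representation of $m$ as a finite sum of elements from $(F_r)$ using no $k$ consecutive terms. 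Hence $m$ is Zeckendorf in $(F_r)$.

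The one point that needs a word of care is ensuring the sum $\sum_{r\colon y_r=1}F_r$ is genuinely finite, i.e.\ that $(y_r)$ is $0$ almost everywhere; but this is part of the conclusion of Proposition \ref{prop:keq2} (the operations keep sequences $0$ almost everywhere, and the lexicographically maximal representative constructed there is supported on a bounded range of indices). I expect no real obstacle here: the whole difficulty has already been absorbed into Propositions \ref{prop:keq1} and \ref{prop:keq2}, and this corollary is essentially a dictionary translation between ``finite sum with multiplicities'' and ``nonnegative coefficient sequence that is $0$ almost everywhere,'' followed by a single invocation of Proposition \ref{prop:keq2} together with the dot‑product invariance of $k$-equivalence.
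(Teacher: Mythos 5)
Your proof is correct and follows essentially the same route as the paper's (very terse) proof: the forward direction is definitional, and the converse is exactly the intended application of Proposition \ref{prop:keq2} together with the dot-product invariance of $\sim_k$. Your added remark that $k$-equivalence preserves the ``$0$ almost everywhere'' property (so the resulting sum is genuinely finite) is a worthwhile detail the paper leaves implicit, but it is not a different approach.
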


\begin{proof}
An element $m$ is a finite sum of elements from $(F_r)$ if it is Zeckendorf in $(F_r)$. The converse of this statement follows from Proposition \ref{prop:keq2}.
\end{proof}

\begin{example}
Consider the primitive Fibonacci sequence of order 2,
\[\begin{array}{rrrrrrrrr}
\ldots, & \varphi^{-3}, & \varphi^{-2}, & \varphi^{-1}, & 1, & \varphi, & \varphi^2, & \varphi^3, & \ldots,
\end{array}\]
where $\varphi=\frac{1+\sqrt{5}}{2}$ is the golden ratio. The elements that are Zeckendorf in this sequence are precisely positive elements in the ring $\mathbb{Z}[\varphi]$, and this numeration system is known as golden ratio base.
\end{example}

\section{Zeckendorf's Theorem for Free $\mathbb{Z}$-Modules}\label{sec:main}

Intuitively, if an element $m$ is Zeckendorf in a Fibonacci sequence $(F_r)$ of order $k$, then it is an integer combination of the initial terms (or, in fact, any $k$ terms) of $(F_r)$. Integer span of these forms a module, so it is natural to attempt to represent every element of a module using sequence elements. Subsection \ref{subsec:2way} below deals with doubly infinite Fibonacci sequences using tools developed in the previous section. The result is then specialized to one-way Fibonacci sequences in Subsection \ref{subsec:unique}. This paves the way for the Zeckendorf decomposition and Fibonacci coding for modules.

\subsection{Two-sided Sequences}\label{subsec:2way}

Corollary \ref{cor:rep} lays a strong foundation for our work. While it characterizes elements that are Zeckendorf in a sequence, it gives insufficient information on the algebraic structure of the set of all elements that have a legitimate representation. The following theorem now reverses the process. It gives a sufficient and necessary condition for every element of a free $\mathbb{Z}$-module $M$ to be Zeckendorf in a Fibonacci sequence $(F_r)$.

\begin{theorem}\label{th:main}
Let $M$ be a free $\mathbb{Z}$-module of rank $l$, and $(F_r)_{r\in\mathbb{Z}}=(\ldots,F_{-1},F_0,F_1,\ldots)$ be a Fibonacci sequence of order $k$. Then, $M$ is Zeckendorf in $(F_r)$ if and only if both of the following conditions are satisfied.
\begin{enumerate}
\item $F_i,\ldots,F_{i+k-1}$ span $M$ for all $i$.
\item $l<k$.
\end{enumerate}
\end{theorem}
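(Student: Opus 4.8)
The plan is to prove both directions separately, leaning heavily on Corollary \ref{cor:rep}, which says an element is Zeckendorf in $(F_r)$ exactly when it is a finite $\mathbb{Z}_{\geq 0}$-combination (with multiplicities) of sequence terms. Note first that since every $F_n$ is an integer linear combination of any $k$ consecutive terms (by repeatedly unwinding the recurrence in both directions), the set of all finite sums of sequence elements equals the $\mathbb{Z}$-span of any block $F_i,\ldots,F_{i+k-1}$; but because $-F_n$ is \emph{also} expressible as a nonnegative combination of other terms (here I would invoke the machinery of Proposition \ref{prop:keq1}: the all-zero target shows $0$ is $k$-equivalent to sequences realizing cancellation, so negatives are reachable), the set of elements Zeckendorf in $(F_r)$ is precisely the subgroup $\langle F_i,\ldots,F_{i+k-1}\rangle$ for any $i$, which is the same subgroup for every $i$. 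Call this subgroup $N$. So $M$ is Zeckendorf in $(F_r)$ iff $M = N$.

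\textbf{Necessity.} Suppose $M$ is Zeckendorf in $(F_r)$, i.e. $M = N$. Condition 1 is then immediate: $F_i,\ldots,F_{i+k-1}$ generate $N = M$ for every $i$. For condition 2, I want $l + 1 \le k$, equivalently $l \le k-1$. The point is that the $k$ consecutive terms $F_i,\ldots,F_{i+k-1}$ generating the rank-$l$ module $M$ cannot be ``too independent'': they satisfy the recurrence relation $F_{i+k} = F_{i+k-1} + \cdots + F_i$, and $F_{i+k}$ is again in $M$, so shifting the window shows every block of $k$ consecutive terms satisfies one relation coming from the next term. More carefully, I would argue that if $l = k$, the $k$ generators $F_i,\ldots,F_{i+k-1}$ of a rank-$k$ free module must be \emph{integrally independent} (a generating set of size equal to the rank of a free $\mathbb{Z}$-module is a basis), hence form a basis; but then the relation $F_{i-1} = F_{i+k-1} - F_{i+k-2} - \cdots - F_i$ expresses $F_{i-1}$ in this basis, and iterating, one sees the whole sequence is determined and lives in a copy of $\mathbb{Z}^k$ where the shift acts by the companion-type matrix $A$ from Proposition \ref{prop:keq1}. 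Since $A$ has an eigenvalue $\lambda_1 \in (1,2)$ that is irrational (the characteristic polynomial $x^k - x^{k-1} - \cdots - 1$ is irreducible over $\mathbb{Q}$, or at least $\lambda_1\notin\mathbb{Q}$), the sequence $(\lambda_1^r)$ realizes the abstract relations, and one derives a contradiction with integral independence — essentially because $(\lambda_1^r)\cdot(x_r)=0$ for a nonzero integer sequence $(x_r)$ would force an algebraic relation of too-low degree. I expect this to be the delicate step; the cleanest route is probably: a rank-$k$ free $\mathbb{Z}$-module Zeckendorf in a Fibonacci sequence of order $k$ would have $F_i,\ldots,F_{i+k-1}$ as a $\mathbb{Z}$-basis for \emph{all} $i$ simultaneously, and the transition matrix between the $i$-block and the $(i+1)$-block is (a conjugate of) $A$, which is not in $GL_k(\mathbb{Z})$ when... — actually $A \in GL_k(\mathbb{Z})$ since $\det A = \pm 1$, so I must instead extract the contradiction from the fact that a free module of rank $k$ cannot carry $k$ integrally independent elements all of whose $\mathbb{Z}$-spans coincide while also satisfying a nontrivial recurrence; I would make this precise by specializing to the primitive sequence $(\lambda_1^r)$ and using that $\lambda_1$ has degree $k$ over $\mathbb{Q}$, so $1, \lambda_1, \ldots, \lambda_1^{k-1}$ are $\mathbb{Q}$-independent, whereas rank considerations on $M$ would force a $\mathbb{Z}$-relation among $k$ of the $F_r$'s, contradicting independence after mapping $F_r \mapsto \lambda_1^r$.

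\textbf{Sufficiency.} Conversely, assume conditions 1 and 2. By condition 1, $N = \langle F_i,\ldots,F_{i+k-1}\rangle \supseteq M$ (it spans $M$), and since every sequence term is a $\mathbb{Z}$-combination of one block, $N \subseteq M$ as well; hence $N = M$. Wait — for $N\subseteq M$ I need that each $F_r$ lies in $M$, which is exactly the hypothesis that $F_i,\ldots,F_{i+k-1}\in M$ and spans $M$, plus closure of $M$; but condition 1 as stated says the block \emph{spans} $M$, and in particular the block elements lie in $M$, so unwinding the recurrence keeps everything in $M$. Thus $M = N$ as sets of group elements. It remains to see $M$ is Zeckendorf in $(F_r)$, i.e. every element of $M = N$ is a finite \emph{nonnegative} sum (with multiplicities) of sequence terms. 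Here I would use Corollary \ref{cor:rep} in the form: $m$ Zeckendorf in $(F_r)$ iff $m$ is a finite sum (with multiplicities) of elements of $(F_r)$ — but I still need those coefficients to be realizable as \emph{nonnegative}, which is where condition 2 enters. The idea: given $m \in M$, write $m = \sum n_j F_{r_j}$ with $n_j \in \mathbb{Z}$ (possible since the blocks span); encode this as a finitely-supported integer coefficient sequence $(x_r)$ with $(x_r)\cdot(F_r) = m$. I cannot directly apply Proposition \ref{prop:keq2} because $(x_r)$ need not be nonnegative. Instead apply Proposition \ref{prop:keq1} to get $(x_r) \sim_k (y_r)$ where $(y_r)$ is zero, or positive on $k$ consecutive terms, or negative on $k$ consecutive terms. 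If it is nonnegative we finish via Proposition \ref{prop:keq2}. If $(y_r)$ is negative on $k$ consecutive terms $y_{s},\ldots,y_{s+k-1} < 0$, I need to flip it to nonnegative while staying $k$-equivalent — this is where $l+1 \le k$, i.e. the block has ``a spare dimension,'' should let me add a suitable zero-valued perturbation. Concretely, for \emph{any} integer vector $\mathbf{b}$ in the (rank $\ge 1$) kernel of the map $\mathbb{Z}^k \to M$, $(b_i,\ldots,b_{i+k-1})$ appended as a finitely-supported sequence is $k$-equivalent to $(0)$ (by Proposition \ref{prop:keq1}'s last clause applied to $\pm\mathbf b$, since it dot-products to $0$ against $(\lambda_1^r)$ too... hmm, need $\mathbf b$ to also kill the primitive sequence, which it does since $(\lambda_1^r)$-dot-product is a ring homomorphism-type invariant). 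Then adding a large positive multiple of such a zero-equivalent block to $(y_r)$ makes all relevant entries nonnegative without changing the $k$-equivalence class, and Proposition \ref{prop:keq2} completes the argument. The main obstacle I anticipate is exactly this last maneuver: verifying that the kernel of $\mathbb{Z}^k \twoheadrightarrow M$ is nonzero (which is precisely $k > l$) and that its elements give $k$-blocks that are $k$-equivalent to zero, so that one can ``add zero'' to repair sign problems; getting the bookkeeping of Operations 1 and 2 right across shifted windows is the fiddly part.
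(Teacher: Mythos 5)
There is a genuine gap, and it starts in your opening paragraph. The claim that the set of elements Zeckendorf in $(F_r)$ is always the full subgroup $N=\langle F_i,\ldots,F_{i+k-1}\rangle$ is false: for the order-$2$ Gaussian sequence $\ldots,-1+i,1,i,1+i,1+2i,\ldots$ the block $\{1,i\}$ spans $\mathbb{Z}[i]$, yet $-1-i$ is not Zeckendorf in it (this is the paper's Example \ref{ex:represent}, second part, and it is precisely why condition 2 is not redundant). Your justification --- that Proposition \ref{prop:keq1} lets you ``reach negatives'' --- only works when the block elements are integrally dependent, i.e.\ when $l<k$, so the claim is circular. This sinks your necessity argument for $l+1\le k$: when $l=k$ the block is indeed a basis, but then the recurrence imposes \emph{no} relation among any $k$ consecutive terms, and no contradiction can be extracted from the module structure, the companion matrix, or the irrationality or degree of $\lambda_1$ alone --- such sequences exist and are perfectly consistent. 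The contradiction must use the hypothesis that \emph{every} element of $M$ is Zeckendorf, in particular both $m$ and $-m$ for some $m\neq0$: adding their binary coefficient sequences gives a nonzero nonnegative sequence $(x_r+y_r)$ with $(x_r+y_r)\cdot(F_r)=0$, and the last clause of Proposition \ref{prop:keq1} collapses it to strictly positive coefficients on $k$ consecutive terms, i.e.\ a nontrivial integral dependence among $F_i,\ldots,F_{i+k-1}$, incompatible with their spanning a free module of rank $k$. That step is missing from your proposal.

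In the sufficiency direction your architecture is right but the mechanism is wrong. A kernel element $\mathbf{b}$ of $\mathbb{Z}^k\to M$ satisfies $(b_r)\cdot(F_r)=0$ for this particular $(F_r)$, but the corresponding finitely supported sequence is \emph{not} $k$-equivalent to the zero sequence: $k$-equivalence preserves the dot product with every order-$k$ Fibonacci sequence, including the primitive one, and typically $(b_r)\cdot(\lambda_1^r)\neq0$. For instance, in the order-$3$ Gaussian example the relation $F_{-2}+F_{-1}-F_1=i+1-(1+i)=0$ has $\lambda_1^{-2}+\lambda_1^{-1}-\lambda_1\approx-1\neq0$, so this block is $3$-equivalent to a sign-definite block, not to zero. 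Hence ``adding a zero-equivalent perturbation without changing the $k$-equivalence class'' is not available. The repair is to work at the level of the represented element rather than the equivalence class: apply Proposition \ref{prop:keq1} to the relation to obtain $0=y_iF_i+\cdots+y_{i+k-1}F_{i+k-1}$ with all $y_j$ of one sign, write $m=z_iF_i+\cdots+z_{i+k-1}F_{i+k-1}$, and add $N$ copies of the relation for a suitable $N$ so that all coefficients become positive; this changes the coefficient vector's equivalence class but not the element it represents, and Corollary \ref{cor:rep} then concludes. That is exactly the paper's argument, and with this substitution your sufficiency direction goes through.
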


\begin{proof}
We first remark that the integral span of $F_i,\ldots,F_{i+k-1}$ is the same as the integral span of $F_{i+1},\ldots,F_{i+k}$ since $F_i+\cdots+F_{i+k-1}=F_{i+k}$. Thus, $F_{i},\ldots,F_{i+k-1}$ span $M$ for \textit{all} $i$ if and only if $F_{i},\ldots,F_{i+k-1}$ span $M$ for \textit{any} $i$. \\

It is easy to see that the first condition is necessary. Since the elements of $(F_r)_{r\in\mathbb{Z}}$ are integer combinations of $F_i,\ldots,F_{i+k-1}$, so must be any sums of the elements from this sequence. It follows that $F_i,\ldots,F_{i+k-1}$ span $M$. This readily implies $l\leq k$. We will now show that $l$ cannot equal to $k$, thus establishing the necessity of the second condition. \\

Suppose that $l=k$, and let $m$ be any nonzero element of $M$. Since both $m$ and $-m$ are Zeckendorf in $(F_r)$, we have
\[m=(x_r)\cdot(F_r)\quad\textrm{ and }\quad-m=(y_r)\cdot(F_r)\]
for some sequences $(x_r),(y_r)$ of 0 and 1 with no $k$ consecutive 1's. Hence, $0=(x_r+y_r)\cdot(F_r)$. By Proposition \ref{prop:keq1}, $(x_r+y_r)$ is $k$-equivalent to a sequence that is positive at some $k$ consecutive terms and zero elsewhere. Thus, we have
\[0=(x_r+y_r)\cdot(F_r)=(z_r)\cdot(F_r)=z_iF_i+z_{i+1}F_{i+1}+\ldots+z_{i+k-1}F_{i+k-1}\]
for some $i$ in which $z_i,z_{i+1},\ldots,z_{i+k-1}>0$. This, however, means that $F_i,F_{i+1},\ldots,F_{i+k-1}$ are integrally dependent, and so they cannot span a module of rank $l=k$. \\

We are left to show that the two conditions are sufficient. Since $F_0,\ldots,F_{k-1}$ span $M$ and $l+1\leq k$, $F_1,\ldots,F_{k-1}$ must be integrally dependent. That is, we may write
\[0=x_0F_0+\cdots+x_{k-1}F_{k-1}\]
where $x_0,\ldots,x_{k-1}$ are not all zeros. In other words, we have
\[0=(x_r)\cdot(F_1)\]
where $(x_r)$ is zero everywhere but some $k$ consecutive terms. We apply Proposition \ref{prop:keq1} and obtain a sequence $(y_r)$ that is $k$-equivalent to $(x_r)$ and contain $k$ consecutive terms of the same sign and zero elsewhere. Note that $(y_r)$ cannot be zero everywhere since that would imply that $x_0=\ldots=x_{k-1}=0$. \\

Suppose now that $(y_r)$ is nonzero at the terms $y_i,\ldots,y_{i+k-1}$. This means
\[0=y_iF_i+\cdots+y_{i+k-1}F_{i+k-1}\]
where either $y_i,\ldots,y_{i+k-1}>0$ or $y_i,\ldots,y_{i+k-1}<0$. Since $F_i,\ldots,F_{i+k-1}$ span $M$, we can write any element $m\in M$ as
\[m=z_iF_i+\cdots+z_{i+k-1}F_{i+k-1}\]
where $z_i,\ldots,z_{i+k-1}$ are integers. Now, we have
\[m=m+0\cdot N=(z_i+y_iN)F_i+\cdots+(z_{i+k-1}+y_{i+k-1}N)F_{i+k-1}\]
for all integer $N$. For a sufficiently large (and possibly negative) $N$, the terms $z_i+y_iN,\ldots,z_{i+k-1}+y_{i+k-1}N$ will all be positive, and it follows from Corollary \ref{cor:rep} that $m$ is Zeckendorf in $(F_r)$. This completes the proof of the theorem.
\end{proof}

\begin{corollary}\label{cor:span}
Let $(F_r)_{r\in\mathbb{Z}}=(\ldots,F_{-1},F_0,F_1,\ldots)$ be a Fibonacci sequence of order $k$. If $F_1,\ldots,F_k$ are integrally dependent, then every element in the integral span of $F_1,\ldots,F_k$ is Zeckendorf in $(F_r)$.
\end{corollary}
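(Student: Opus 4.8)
The plan is to reduce Corollary \ref{cor:span} to the sufficiency half of Theorem \ref{th:main}. Observe that the statement concerns only the submodule $M' := \mathbb{Z}\langle F_1,\ldots,F_k\rangle$ (the integral span of $F_1,\ldots,F_k$), so I will work inside $M'$ and argue that the two hypotheses of Theorem \ref{th:main} hold for $M'$ in place of $M$. First I would let $l$ denote the rank of $M'$, i.e.\ the number of elements in a basis of $M'$ (recall $M'$ is automatically a free $\mathbb{Z}$-module as a submodule of the free $\mathbb{Z}$-module generated by any $k$ consecutive terms of $(F_r)$).

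Condition 1 of Theorem \ref{th:main} is immediate: $F_1,\ldots,F_k$ span $M'$ by definition of $M'$, and since the integral span of $F_i,\ldots,F_{i+k-1}$ coincides with that of $F_{i+1},\ldots,F_{i+k}$ (because $F_i+\cdots+F_{i+k-1}=F_{i+k}$, the remark opening the proof of Theorem \ref{th:main}), it follows that $F_i,\ldots,F_{i+k-1}$ span $M'$ for every $i$. For Condition 2, namely $l+1\leq k$, I would use the hypothesis that $F_1,\ldots,F_k$ are integrally dependent: a generating set of $k$ elements that satisfies a nontrivial integer linear relation cannot be a basis, and in fact one can pass to a strictly smaller generating subset, so $l\leq k-1$, i.e.\ $l+1\leq k$. (The cleanest way to see $l \le k-1$: any spanning set of a rank-$l$ free $\mathbb{Z}$-module has at least $l$ elements, and if it has exactly $l$ it must be a basis hence integrally independent; since our spanning set of size $k$ is integrally dependent, $k > l$.)

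With both conditions verified for $M'$, Theorem \ref{th:main} gives that $M'$ is Zeckendorf in $(F_r)$, which is exactly the assertion that every element in the integral span of $F_1,\ldots,F_k$ is Zeckendorf in $(F_r)$.

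The main thing to be careful about — and what I expect to be the only real obstacle — is the book-keeping around ranks and the fact that Theorem \ref{th:main} is phrased for a free $\mathbb{Z}$-module $M$ that is Zeckendorf in $(F_r)$, whereas here we start from the sequence and restrict attention to the submodule it generates. I need to make sure that (a) $M'$ really is free of some finite rank $l$ (true: submodule of a finitely generated free $\mathbb{Z}$-module), and (b) the ``span for all $i$'' condition is genuinely about spanning $M'$ and not some larger module — this is fine because every $F_r$ lies in $M'$ by the recurrence run both forwards and backwards from $F_1,\ldots,F_k$, so no term of the sequence escapes $M'$. Once these two points are nailed down, the corollary is a direct instantiation of Theorem \ref{th:main}.
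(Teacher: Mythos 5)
Your proof is correct and is essentially the route the paper intends: Corollary \ref{cor:span} is stated without a separate proof precisely because it is an instantiation of (the sufficiency half of) Theorem \ref{th:main}, and you carry out that instantiation explicitly by applying the theorem to the submodule $M'=\mathbb{Z}\langle F_1,\ldots,F_k\rangle$, using the shift-invariance of the span for Condition 1. The one ingredient you supply that the paper leaves implicit --- that an integrally \emph{dependent} spanning set of size $k$ forces $\operatorname{rank}(M')\le k-1$, via the standard fact that a generating set of size $l$ for a rank-$l$ free $\mathbb{Z}$-module must be a basis --- is sound, and is exactly the converse of the implication (``$l+1\le k$ implies $F_0,\ldots,F_{k-1}$ are integrally dependent'') that the paper uses inside the proof of Theorem \ref{th:main}.
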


In view of Theorem \ref{th:main}, every integer has a (classical) Zeckendorf representation since $\mathbb{Z}$ is a module over itself with rank 1, and the negaFibonacci sequence is of order 2 with $F_0=0$ and $F_1=1$. In fact, the same result would hold as long as $F_0$ and $F_1$ are relatively prime (and hence span $\mathbb{Z}$ integrally). This is technically the case for the Lucas sequence where $L_0=2$ and $L_1=1$. Next, we give an example for the case of Gaussian integers.

\begin{example}\label{ex:represent}
Consider a Fibonacci sequence
\[\begin{array}{rrrrrrrrrr}
\ldots, & -4+4i, & 5+i, & -2-3i, & -1+2i, & 2, & -1-i, & i, & 1, & \\
& 0, & 1+i, & 2+i, & 3+2i, & 6+4i, & 11+7i, & 20+13i, & 37+24i, & \ldots.
\end{array}\]
from Example \ref{ex:basic}. It follows from Theorem \ref{th:main} that every Gaussian integer can be written as a sum of elements from this sequence with no 3 consecutive terms. For instance,
\begin{align*}
-2 & = (-2-3i)+(-1+2i)+(i)+(1) & -2i & =(-2-3i)+(2)+(i) \\
-1 & = (-1-i)+(i) & -i & =(-1-i)+(1) \\
2 & =(2) & 2i & = (-1+2i)+(1) \\
3 & =(2)+(1) & 3i & = (-1+2i)+(i)+(1) \\
1+2i & = (-1+2i)+(2) & 1-2i & = (-2-3i)+(i)+(1) \\
-2+i & =(-1+2i)+(-1-i) & -2-i & = (-2-3i)+(-1+2i)+(1).
\end{align*}
One may observe that $(i)+(1)=(1+i)$, and so Zeckendorf representation for a Gaussian integer in this sequence is not unique. Nonetheless, we will see in the next subsection how certain adjustments can make unique representation possible. \\

On the contrary, no Fibonacci sequence of order 2 can generate all Gaussian integers. For example, $-1-i$ cannot be written as a sum of elements from the sequence
\[\begin{array}{rrrrrrrrr}
\ldots, & -21+13i, & 13-8i, & -8+5i, & 5-3i, & -3+2i, & 2-i, & -1+i, & \\
& 1, & i, & 1+i, & 1+2i, & 2+3i, & 3+5i, & 5+8i, & \ldots.
\end{array}\]
Note that Corollary \ref{cor:span} does not apply here since $1$ and $i$ are not integrally dependent. However, it follows from Corollary \ref{cor:rep} that every element of the form $a+bi$, $a,b\in\mathbb{Z}^+$, is Zeckendorf in this sequence.
\end{example}

\subsection{One-sided Sequences and Unique Representation}\label{subsec:unique}

For dense code, we need a one-to-one mapping. However, we see in Example \ref{ex:represent} that a two-sided sequence do not provide such luxury. Note that this is observed in the classical Fibonacci sequence as well. It is only when the negaFibonacci sequence is extracted from the two-sided infinite sequence
\[\ldots,34,-21,13,-8,5,-3,2,-1,1,0,1,1,2,3,5,8,13,21,34,\ldots\]
that every integer has a unique Zeckendorf representation. In this subsection, we establish analogous results for Fibonacci sequences of higher order. The key ingredient of this development is Corollary \ref{cor:unique}, which states that binary sequences with no $k$ consecutive 1's cannot be $k$-equivalent.


\begin{lemma}\label{lem:bound}
Let $a$ be an integer, and let $(\lambda_k^r)_{r\in\mathbb{Z}}$ be the primitive Fibonacci sequence of order $k$. Then,
\[\sum_{\substack{r<a \\ k\,\nmid\,a-r}}{\lambda_k^r}=\lambda_k^a.\]
\end{lemma}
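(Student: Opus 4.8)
The plan is to reindex the sum, reduce it to a single scalar identity about $\lambda_1$, and then read that identity off the characteristic polynomial.

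First I would substitute $r=a-j$: as $r$ ranges over the integers $<a$, the index $j$ ranges over the positive integers, and the condition $k\nmid a-r$ becomes $k\nmid j$. Since $\lambda_1\in(1,2)$, the terms $\lambda_1^{-j}$ decay geometrically, so all the series involved converge absolutely and may be freely rearranged and split. Factoring out $\lambda_1^a$, the claimed identity becomes
\[
\sum_{\substack{j\geq 1\\ k\,\nmid\,j}}\lambda_1^{-j}=1.
\]
Next I would write $\sum_{j\geq 1}=\sum_{k\nmid j}+\sum_{k\mid j}$ and evaluate the two geometric series
\[
\sum_{j\geq 1}\lambda_1^{-j}=\frac{1}{\lambda_1-1},\qquad
\sum_{\substack{j\geq 1\\ k\,\mid\,j}}\lambda_1^{-j}=\sum_{m\geq 1}\lambda_1^{-km}=\frac{1}{\lambda_1^{k}-1},
\]
so that the lemma is reduced to the purely algebraic statement $\dfrac{1}{\lambda_1-1}-\dfrac{1}{\lambda_1^{k}-1}=1$, equivalently, after clearing denominators, $\lambda_1^{k+1}-2\lambda_1^{k}+1=0$.

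To finish, I would verify the polynomial identity
\[
(x-1)\bigl(x^{k}-x^{k-1}-\cdots-x-1\bigr)=x^{k+1}-2x^{k}+1,
\]
which is a one-line expansion in which the intermediate powers $x^{k-1},\ldots,x$ telescope away. Since $\lambda_1$ is by definition a root of the characteristic polynomial $x^{k}-x^{k-1}-\cdots-x-1$, it is a root of the right-hand side, hence $\lambda_1^{k+1}-2\lambda_1^{k}+1=0$, which is exactly what the previous paragraph required. (Here $k\geq 2$, consistent with the standing discussion of the characteristic polynomial; for $k=1$ no such root $\lambda_1$ exists.)

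I do not expect a serious obstacle: the only points needing a word of care are the absolute convergence that legitimizes the reindexing and the term-by-term splitting — supplied by $\lambda_1\in(1,2)$ — and the discipline of invoking the characteristic polynomial rather than manipulating $\lambda_1$ directly. If one prefers to avoid the factored form, the same conclusion follows immediately from $\lambda_1^{k}=\lambda_1^{k-1}+\cdots+\lambda_1+1$ by multiplying through by $\lambda_1$ and using $\lambda_1^{k-1}+\cdots+\lambda_1=\lambda_1^{k}-1$ to obtain $\lambda_1^{k+1}=2\lambda_1^{k}-1$.
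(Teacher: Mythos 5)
Your proposal is correct and follows essentially the same route as the paper: split the sum into all terms with $r<a$ minus those with $k\mid a-r$, evaluate both geometric series (giving $\tfrac{1}{\lambda_1-1}-\tfrac{1}{\lambda_1^k-1}$ after factoring out $\lambda_1^a$), and close the gap using the characteristic polynomial $\lambda_1^k=\lambda_1^{k-1}+\cdots+\lambda_1+1$. The only difference is cosmetic — you reindex and clear denominators to the identity $\lambda_1^{k+1}-2\lambda_1^k+1=0$, while the paper rewrites $\tfrac{1}{\lambda_1-1}$ as $\tfrac{\lambda_1^{k-1}+\cdots+1}{\lambda_1^k-1}$ and applies the recurrence directly.
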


\begin{proof}
We have
\begin{align*}
\sum_{\substack{r<a \\ k\nmid a-r}}{\lambda_k^r} & =\sum_{r<a}{\lambda_k^r}-\sum_{\substack{r<a \\ k\mid a-r}}{\lambda_k^r} \\
& = \frac{\lambda_k^a}{\lambda_k-1}-\frac{\lambda_k^a}{\lambda_k^k-1} \\
& = \lambda_k^a\left(\frac{(\lambda_k^{k-1}+\lambda_k^{k-2}+\ldots+1)-1}{\lambda_k^k-1}\right) \\
& = \lambda_k^a\left(\frac{\lambda_k^k-1}{\lambda_k^k-1}\right) \\
& =\lambda_k^a.
\end{align*}
\end{proof}

\begin{corollary}\label{cor:unique}
Let $(x_r)$ and $(y_r)$ be doubly infinite binary sequences with no $k$ consecutive 1's that are 0 almost everywhere.
\begin{enumerate}
\item If $(x_r)\succ_k(y_r)$, then $(x_r)\cdot(\lambda_k^r)>(y_r)\cdot(\lambda_k^r)$.
\item If $(x_r)\sim_k(y_r)$, then $(x_r)=(y_r)$.
\end{enumerate}
\end{corollary}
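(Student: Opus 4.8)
The plan is to prove part 1 first and then derive part 2 as an easy consequence. For part 1, suppose $(x_r) \succ_k (y_r)$, and let $a$ be the rightmost index at which the two sequences differ; by definition of $\succ_k$ this means $x_a = 1$ and $y_a = 0$, and $x_r = y_r$ for all $r > a$. Then
\[
(x_r)\cdot(\lambda_1^r) - (y_r)\cdot(\lambda_1^r) = \lambda_1^a + \sum_{r < a}(x_r - y_r)\lambda_1^r,
\]
so it suffices to show that $\sum_{r<a}(x_r - y_r)\lambda_1^r > -\lambda_1^a$, i.e. that $\sum_{r<a}(y_r - x_r)\lambda_1^r < \lambda_1^a$. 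Since $x_r \geq 0$ for all $r$, we have $\sum_{r<a}(y_r - x_r)\lambda_1^r \leq \sum_{r<a} y_r \lambda_1^r$, and the latter is a sum of $\lambda_1^r$ over a set of indices $r < a$ at which $y_r = 1$. Because $(y_r)$ has no $k$ consecutive $1$'s, within any block of $k$ consecutive indices below $a$ at least one index carries a $0$; in particular the set of indices $r < a$ with $y_r = 1$ is contained in the set $\{\, r < a : k \nmid a - r \,\}$ once we align the blocks correctly (each maximal run of $1$'s has length at most $k-1$ and is followed by a $0$). Hence $\sum_{r<a} y_r \lambda_1^r \leq \sum_{r<a,\, k\nmid a-r} \lambda_1^r = \lambda_1^a$ by Lemma \ref{lem:bound}, and the inequality is strict because $y_a = 0 \neq 1 = x_a$ forces $x_r - y_r$ to be genuinely positive at $r = a$ while the tail cannot fully compensate. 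I would tighten the block-counting argument so that the bound is strictly less than $\lambda_1^a$, or simply note that equality in Lemma \ref{lem:bound} requires $y_r = 1$ for \emph{every} $r < a$ with $k \nmid a - r$, which is impossible for a sequence that is $0$ almost everywhere; therefore the inequality is strict and part 1 follows.

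For part 2, suppose $(x_r) \sim_k (y_r)$ with both sequences binary and having no $k$ consecutive $1$'s. Since $\sim_k$-equivalence preserves the dot product with the primitive sequence, $(x_r)\cdot(\lambda_1^r) = (y_r)\cdot(\lambda_1^r)$. If $(x_r) \neq (y_r)$, then either $(x_r) \succ_k (y_r)$ or $(y_r) \succ_k (x_r)$, and by part 1 the two dot products would be strictly unequal, a contradiction. Hence $(x_r) = (y_r)$.

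The main obstacle is the block-counting estimate in part 1: one must be careful that the set of indices below $a$ carrying a $1$ in $(y_r)$ really does inject into $\{\, r < a : k \nmid a - r \,\}$ in a way compatible with the geometric weights $\lambda_1^r$, and that strictness is correctly extracted. The cleanest route is probably to argue greedily from the right: scanning indices $a-1, a-2, \ldots$, the condition "no $k$ consecutive $1$'s" forces a $0$ at least once in every window of $k$, and one shows by a small induction (or by directly invoking Lemma \ref{lem:bound} as the extremal case) that the partial sums $\sum_{a-m \leq r < a} y_r \lambda_1^r$ stay strictly below $\sum_{a-m\le r<a,\, k\nmid a-r}\lambda_1^r \le \lambda_1^a$ for every $m$, with the gap never closing because $(y_r)$ is eventually $0$. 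Everything else is bookkeeping: finiteness of supports guarantees all sums converge and the rightmost disagreement index $a$ exists.
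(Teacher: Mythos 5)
Your proposal follows the same route as the paper's proof: take the largest index $a$ of disagreement, use Lemma \ref{lem:bound} to rewrite $\lambda_1^a$ as $\sum_{r<a,\,k\nmid a-r}\lambda_1^r$, bound the contribution of $(y_r)$ below $a$ by this quantity, and deduce part 2 from part 1 via the fact that $\sim_k$ preserves the dot product with the primitive sequence. The one place where your write-up overstates is the claim that the set of indices $r<a$ with $y_r=1$ is contained in $\{r<a : k\nmid a-r\}$: that is false as literally stated (take $k=2$ and $(y_r)$ supported only at $r=a-2$), and no realignment of blocks produces such a containment. What is true, and all you need, is a block-wise comparison: partition $\{r : r<a\}$ into blocks $B_m=\{a-mk,\ldots,a-mk+k-1\}$, $m\ge1$. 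In each $B_m$ the sequence $(y_r)$ has at most $k-1$ ones (a block is exactly $k$ consecutive indices), while $\{r\in B_m : k\nmid a-r\}$ consists of precisely the top $k-1$ indices of $B_m$; an order-preserving injection therefore sends each $1$-position to an index at least as large, and since $\lambda_1>1$ this gives $\sum_{r\in B_m}y_r\lambda_1^r\le\sum_{r\in B_m,\,k\nmid a-r}\lambda_1^r$. Summing over $m$ yields the non-strict bound, and strictness is immediate because all but finitely many blocks contribute $0$ on the left but a positive amount on the right --- this is cleaner than your appeal to the equality case of Lemma \ref{lem:bound}, though that also works once the non-strict inequality is in hand. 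With this repair your argument is complete; it is worth noting that the paper itself leaves exactly this step implicit in its displayed chain of inequalities.
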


\begin{proof}
Suppose that $(x_r)\succ_k(y_r)$, and let $a$ be the largest index where $(x_r)$ and $(y_r)$ are not equal. It follows that $(x_a)=1$, $(y_a)=0$, and
\begin{align*}
(x_r)\cdot(\lambda_k^r) & \geq\lambda_k^a+\sum_{r>a}{x_r\lambda_k^r} \\
& =\sum_{\substack{r<a \\ k\nmid a-r}}{\lambda_k^r}+\sum_{r>a}{x_r\lambda_k^r} \\
& >\sum_{r\leq a}{y_r\lambda_k^r}+\sum_{r>a}{y_r\lambda_k^r}=(y_r)\cdot(\lambda_k^r).
\end{align*}
Now, if $(x_r)\neq(y_r)$, then we may assume without loss of generality that $(x_r)\succ_k(y_r)$. This means $(x_r)\cdot(\lambda_k^r)>(y_r)\cdot(\lambda_k^r)$, and so $(x_r)\nsim_k(y_r)$. This readily establishes part \textit{ii}.
\end{proof}

Typically, $(x_r)\sim_k(y_r)$ means that $(x_r)\cdot(F_r)=(y_r)\cdot(F_r)$, and so $(x_r)$ and $(y_r)$ are two representations of the same element. Corollary \ref{cor:unique} now allows us to identify Fibonacci sequences (or parts of) that permit unique Zeckendorf representation: if a sequence $(F_r)$ has the property that $(x_r)\cdot(F_r)=(y_r)\cdot(F_r)$ implies $(x_r)\sim_k(y_r)$, then any element that is Zeckendorf in $(F_r)$ will have a unique representation. Since we are interested in generating every element in a $\mathbb{Z}$-module, we look into generalizing negaFibonacci sequence, which is technically a one-sided Fibonacci sequence to the left whose first excluded term is 0. It turns out that this observation generalizes well to modules.

\begin{theorem}\label{th:unique}
Let $k\geq 2$ be an integer, and $M$ be a free $\mathbb{Z}$-module of rank $k-1$. Let $(F_r)_{r\in\mathbb{Z}}=(\ldots,F_{-1},F_0,F_1,\ldots)$ be a Fibonacci sequence of order $k$ where $F_0=0$ and $F_{-k+1},\ldots,F_{-1}$ span $M$. Then, every element $m\in M$ can be uniquely written as a finite sum of elements from the one-way sequence
\[\ldots,F_{-3},F_{-2},F_{-1}\]
with no $k$ consecutive terms.
\end{theorem}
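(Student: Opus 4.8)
The plan is to attach to every coefficient sequence $(x_r)$ that is $0$ almost everywhere two invariants, its \emph{value} $(x_r)\cdot(F_r)\in M$ and its \emph{weight} $(x_r)\cdot(\lambda_1^r)\in\mathbb{R}$, where $(\lambda_1^r)$ is the primitive sequence; both are preserved by $k$-equivalence. First I would record some preliminaries. Since $F_{-k+1},\dots,F_{-1}$ span $M$ and there are $k-1=\operatorname{rank}M$ of them, they form a basis; writing $c_{r,j}$ for the $j$-th coordinate of $F_r$ in this basis, the recurrence at $n=0$ gives $F_{-k}+F_{-k+1}+\dots+F_{-1}=0$, while the characteristic equation gives $\lambda_1^{-1}+\dots+\lambda_1^{-k}=1$, so the sequence $(e_r)$ with $e_{-1}=\dots=e_{-k}=1$ and $e_r=0$ otherwise has value $0$ and weight $1$. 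The key point I would then establish is a \emph{rigidity} statement: if $(d_r)$ is $0$ almost everywhere and $(d_r)\cdot(F_r)=0$, then $(d_r)\cdot(\lambda_1^r)\in\mathbb{Z}$. This follows from the identity $\lambda_1^r=G_r+\sum_{j=1}^{k-1}\lambda_1^{-j}c_{r,j}$ (both sides being Fibonacci sequences of order $k$ in $r$ that agree on the window $\{-k+1,\dots,0\}$, using $c_{0,j}=0$ and $G_0=1$), where $(G_r)$ is the integer Fibonacci sequence of order $k$ with $G_{-k+1}=\dots=G_{-1}=0$ and $G_0=1$: summing against $(d_r)$ and noting that $\sum_r d_r c_{r,j}$ is the $j$-th coordinate of $(d_r)\cdot(F_r)=0$, one gets $(d_r)\cdot(\lambda_1^r)=\sum_r d_r G_r\in\mathbb{Z}$. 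Equivalently, the module of relations among the $F_r$ is generated by the recurrence relations (each of weight $0$, by the characteristic equation) and the single relation $F_0=0$ (of weight $1$).

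Next I would note two facts about \emph{legal} coefficient sequences (binary, no $k$ consecutive $1$'s, $0$ almost everywhere). Such a sequence has weight $\ge 0$, and its weight is $<1$ if and only if its support lies in $\{r\le-1\}$: if $y_t=1$ with $t\ge0$ the weight is at least $\lambda_1^t\ge1$, while a legal sequence supported in $\{-M,\dots,-1\}$ has weight at most that of the lexicographically largest such sequence, namely $\sum_{-M\le r<0,\,k\nmid-r}\lambda_1^r<\sum_{r<0,\,k\nmid-r}\lambda_1^r=1$ by Lemma~\ref{lem:bound}, the first inequality being Corollary~\ref{cor:unique}(i). Also, by Corollary~\ref{cor:unique}(i) together with trichotomy of $\succ$, distinct legal sequences have distinct weights.

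For existence, Theorem~\ref{th:main} applies: every window $F_i,\dots,F_{i+k-1}$ spans $M$ (it contains, up to the span-invariance remark in that proof, the spanning set $F_{-k+1},\dots,F_{-1}$), and $\operatorname{rank}M+1=k$. So a given $m\in M$ has a legal representation $(y^{(0)}_r)\cdot(F_r)=m$ of some weight $v_0\ge0$. If $v_0<1$ the support of $(y^{(0)}_r)$ already lies in $\{r\le-1\}$ and we are done; otherwise I would subtract $(e_r)$. The sequence $(y^{(0)}_r)-(e_r)$ is $0$ almost everywhere with value $m$ and weight $v_0-1\ge0$, so Proposition~\ref{prop:keq1} makes it $k$-equivalent to a sequence positive on a block of $k$ terms (the negative-block option is excluded since the weight is $\ge0$; the zero option forces $m=0$, a trivial case), which is nonnegative, whence Proposition~\ref{prop:keq2} makes it $k$-equivalent to a legal sequence $(y^{(1)}_r)$ — still of value $m$, now of weight $v_0-1$. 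After $\lfloor v_0\rfloor$ such steps we obtain a legal representation of $m$ of weight in $[0,1)$, hence supported in $\{r\le-1\}$, which is the desired decomposition.

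For uniqueness, suppose $(y_r)$ and $(y'_r)$ are legal, supported in $\{r\le-1\}$, with $(y_r)\cdot(F_r)=(y'_r)\cdot(F_r)$. Their weights lie in $[0,1)$, while their difference equals $\bigl((y_r)-(y'_r)\bigr)\cdot(\lambda_1^r)\in\mathbb{Z}$ by the rigidity statement applied to $(d_r)=(y_r)-(y'_r)$; hence the two weights coincide, and since distinct legal sequences have distinct weights, $(y_r)=(y'_r)$. I expect the main obstacle to be the rigidity statement itself: it is precisely what forces the weights of the (generally many) legal representations of a fixed $m$ to form a single coset of $\mathbb{Z}$, so that exactly one of them lies in $[0,1)$, and proving it cleanly hinges on identifying the relation module of $(F_r)$ — equivalently, on the displayed identity for $\lambda_1^r$. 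Secondary care is needed for the degenerate $v_0=1$ (that is, $m=0$) case in the existence iteration, and for the $[0,1)$ weight bound on one-way legal sequences, which leans on Lemma~\ref{lem:bound} and Corollary~\ref{cor:unique}.
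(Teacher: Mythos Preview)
Your proof is correct, and it takes a genuinely different route from the paper's.

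\textbf{Existence.} The paper works directly on the negative half: starting from any integer combination of $F_{-k+1},\dots,F_{-1}$, it applies \textit{Operation~1} at $n=0$ enough times to make all negative-index coefficients nonnegative (at the cost of a negative $x_0$, which is harmless since $F_0=0$), then takes the lexicographically maximal such sequence and argues as in Proposition~\ref{prop:keq2} that it must be binary with no $k$ consecutive $1$'s. You instead invoke Theorem~\ref{th:main} to get some two-sided legal representation and then repeatedly subtract the value-$0$, weight-$1$ ``shift'' $(e_r)$, reprocessing through Propositions~\ref{prop:keq1} and~\ref{prop:keq2} each time, until the weight drops below~$1$ and the support is forced into $\{r\le-1\}$.

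\textbf{Uniqueness.} Both arguments hinge on the same two facts: legal sequences supported in $\{r\le-1\}$ have weight in $[0,1)$ (Lemma~\ref{lem:bound}), and two representations of the same element differ in weight by an integer. The paper obtains the latter by pushing both sequences via \textit{Operations 1/2} into the window $\{-k+1,\dots,0\}$ and using that $F_{-k+1},\dots,F_{-1}$ is a basis to match all coordinates except possibly at index~$0$. Your rigidity lemma abstracts this cleanly: the identity $\lambda_1^r=G_r+\sum_{j=1}^{k-1}\lambda_1^{-j}c_{r,j}$ shows directly that any value-$0$ coefficient sequence has integral weight, i.e.\ the weight descends to a well-defined map $M\to\mathbb{R}/\mathbb{Z}$.

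\textbf{Trade-offs.} The paper's existence argument is more self-contained (it does not need Theorem~\ref{th:main}) and translates immediately into the encoding Algorithm~\ref{algo:encode}. Your approach is more structural: once rigidity is in hand, both existence and uniqueness become the single statement that each $\mathbb{Z}$-coset of weights contains exactly one representative in $[0,1)$, and the iteration to reach it is conceptually transparent. Your handling of the boundary case (weight exactly $0$ forces value $0$, so $v_0\notin\mathbb{Z}$ when $m\neq0$) is correct and worth stating explicitly, as you do.
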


\begin{proof}
We first prove the existence. Let $m\in M$, and write
\[m=x_{-k+1}F_{-k+1}+\ldots+x_{-2}F_{-2}+x_{-1}F_{-1}\]
where $x_{-k+1},\ldots,x_{-2},x_{-1}$ are integers. We extend $x_{-k+1},\ldots,x_{-2},x_{-1}$ to a sequence $(x_r)_{r\in\mathbb{Z}}$ that is zero everywhere except $r=-k+1,\ldots,-2,-1$. Now, consider all sequences of integers that are $k$-equivalent to $(x_r)_{r\in\mathbb{Z}}$ such that the terms with positive index are zero and the terms with negative index are nonnegative. Such a sequence exists since one may apply \textit{Operation 1} from Definition \ref{def:eq} to $(x_r)$ at $n=0$ for $\max\{|x_{-k+1}|,\ldots,|x_{-2}|,|x_{-1}|\}$ times. Denote by $(y_r)_{r\in\mathbb{Z}}$ the sequence that has the aforementioned properties with highest lexicographical order. We claim that $(y_r)$ gives a Zeckendorf representation for $m$ in $\ldots,F_{-3},F_{-2},F_{-1}$. \\

We first note that $y_r=0$ for $r\in\mathbb{Z}^+$ and $F_0=0$, and so $(y_r)\cdot(F_r)=\sum_{r\in\mathbb{Z}^-}{y_rF_r}$. If $y_s\geq2$ for some $s\in\mathbb{Z}^-$, then we perform \textit{Operation 1} at $n=s$ and \textit{Operation 2} at $n=s+1$ to obtain a sequence with higher lexicographical order than $(y_r)$. If $y_{s-k+1}=\ldots=y_{s-1}=y_{s}=1$ for some $s\in\mathbb{Z}^-$, then performing \textit{Operation 2} at $n=s$ results in a sequence with higher lexicographical order. Thus, $(y_r)$ is binary with no $k$ consecutive 1's, and $\sum_{r\in\mathbb{Z}^-}{y_rF_r}$ is a Zeckendorf representation for $m$ in $\ldots,F_{-3},F_{-2},F_{-1}$. \\

Suppose now for the sake of contradiction that there exists an element $m\in M$ that has two representations as a finite sum of elements from the sequence $\ldots,F_{-3},F_{-2},F_{-1}$ with no $k$ consecutive terms. That is, we have $m=(x_r)\cdot(F_r)=(y_r)\cdot(F_r)$ where $(x_r)\neq(y_r)$ are doubly infinite binary sequences which are zero for $r\geq0$. Now, we keep applying either \textit{Operation 1} or \textit{2} to eliminate the leftmost nonzero term of $(x_r)$ and $(y_r)$ and obtain sequences $(x'_r)$ and $(y'_r)$ which are zero everywhere except for $r=-k+1,\ldots,-2,-1,0$. This means
\begin{align*}
(x'_r)\cdot(F_r) & =x'_{-k+1}F_{-k+1}+\ldots+x'_{-2}F_{-2}+x'_{-1}F_{-1}+x'_0F_0 \\
\intertext{and}
(y'_r)\cdot(F_r) & =y'_{-k+1}F_{-k+1}+\ldots+y'_{-2}F_{-2}+y'_{-1}F_{-1}+y'_0F_0.
\end{align*}
are both equal to $m$. Since $F_{-k+1},\ldots,F_{-1}$ span $M$ of rank $k-1$, we must have $x'_r=y'_r$ for $r=-k+1,\ldots,-2,-1$. If $x'_0=y'_0$, then $(x_r)\sim_k(x'_r)=(y'_r)\sim_k(y_r)$, and so $(x_r)$ and $(y_r)$ must be the same from Corollary \ref{cor:unique}. Otherwise, we assume without loss of generality that $x'_0>y'_0$. We now have
\begin{align*}
(x_r)\cdot(\lambda_k^r)-(y_r)\cdot(\lambda_k^r) & =(x'_r)\cdot(\lambda_k^r)-(y'_r)\cdot(\lambda_k^r) \\
& =(x'_0-y'_0)\lambda_k^0 \\
& \geq 1.
\end{align*}
Since $(y_r)$ is binary, it follows that $(x_r)\cdot(\lambda_k^r)\geq1$. However, we see from Lemma \ref{lem:bound} that
\[(x_r)\cdot(\lambda_k^r)<\sum_{\substack{r<0 \\ k\nmid -r}}{\lambda_k^r}=\lambda_k^0=1,\]
which is a contradiction. This completes the proof of the theorem.
\end{proof}

Obviously, the generalized Zeckendorf theorem over negaFibonacci sequence \cite{B} serves as an instance of Theorem \ref{th:unique}. We give another example using Gaussian integers.

\begin{example}\label{ex:unique}
Consider a Fibonacci sequence
\[\begin{array}{rrrrrrrrrr}
\ldots, & -4+4i, & 5+i, & -2-3i, & -1+2i, & 2, & -1-i, & i, & 1, & \\
& 0, & 1+i, & 2+i, & 3+2i, & 6+4i, & 11+7i, & 20+13i, & 37+24i, & \ldots.
\end{array}\]
from Example \ref{ex:basic} and \ref{ex:represent}. One can see that every representation given in Example \ref{ex:represent} only involve terms from the sequence 
\[\begin{array}{rrrrrrrrr}
\ldots, & -4+4i, & 5+i, & -2-3i, & -1+2i, & 2, & -1-i, & i, & 1.
\end{array}\]
In fact, every Gaussian integer can be uniquely written as a sum of elements from this sequence with no 3 consecutive terms. This property will be exploited when we develop Fibonacci coding for Gaussian integers in Example \ref{ex:code} in the next section.
\end{example}

\section{Multidimensional Fibonacci Coding}\label{sec:code}


The classical Fibonacci code of order 2 is quite simple. By virtue of Zeckendorf's theorem, integers are written in base Fibonacci and encoded as a string of 0's and 1's in reverse order together with a suffix 1. For example, 11 is encoded as 001011 since it is the sum of the third and the fifth Fibonacci numbers, which are 3 and 8 respectively. The widely-accepted Fibonacci code of higher order is not so straightforward. Note that simply adding a consecutive runs of 1 no longer makes the code uniquely decodable \cite{AF}. The integers are instead mapped to a lexicographically ordered string of 0's and 1's with a single $k$ consecutive 1's at the end. This encoding, though artificial, provides dense codes. \\

We see from Theorem \ref{th:unique} that it is possible to write any element of a module as a sum of entries from a sequence using no $k$ consecutive terms. This suggests Fibonacci coding for modules with the use of $k$ consecutive 1's as a separator. This intuition formalizes into Algorithm \ref{algo:encode}. Here, we denote $k$ consecutive 1's, $\underbrace{11\ldots1}_{k}$, by $1_k$.

\begin{algorithm}
\caption{Fibonacci Encoding}
\textbf{Input:} A Fibonacci sequence $\ldots,F_{-3},F_{-2},F_{-1}$ of order $k$ where $F_{-k}+\ldots+F_{-1}=0$ and $F_{-k+1},\ldots,F_{-1}$ span a free $\mathbb{Z}$-module $M$ of rank $k-1$, and an element $m\in M$. \\
\textbf{Output:} Fibonacci code for $m$.
\begin{algorithmic}[1]
\STATE If $m=0$, output $1_k$. END.
\STATE Write $m$ as
\[x_{-k+1}F_{-k+1}+\ldots+x_{-2}F_{-2}+x_{-1}F_{-1}.\]
Set $x_0=0$ and $x_r=0$ for $r\leq-k$.
\STATE If $\check{x}=\min\{x_{-k+1},\ldots,x_{-2},x_{-1}\}<0$, subtract $\check{x}$ from $x_{-k},x_{-k+1},\ldots,x_{-2},x_{-1}$.
\STATE Find the largest index $s<0$ such that $x_{s-k+1},\ldots,x_{s}\geq1$. If there is none, go to Step 5. Otherwise, subtract 1 from $x_{s-k+1},\ldots,x_{s}$, add 1 to $x_{s+1}$, and repeat this step.
\STATE Find the largest index $s<0$ such that $x_s\geq2$. If there is none, go to Step 6. Otherwise, subtract 2 from $x_s$, add 1 to $x_{s-k}$ and $x_{s+1}$, and go to Step 4.
\STATE Find the smallest index $s<0$ such that $x_s=1$. Output $x_{-1}x_{-2}x_{-3}\ldots x_{s+2}x_{s+1}01_k$. END.
\end{algorithmic}
\label{algo:encode}
\end{algorithm}

Note that the coefficients are encoded in reverse order, and we replace the last 1 with $01_k$ to make our code uniquely decodable. This frees us from having to keep the lookup table, but makes the code suboptimal in terms of density.
In addition, note that the algorithm simply mimics the arguments given in the proof of Theorem \ref{th:unique}. This guarantees that the algorithm terminates since the operations performed in Step 4 and 5 increase the lexicographical order of the sequence. At every step, the value $\sum_{r\in\mathbb{Z}^-}{x_rF_r}$ remains unchanged. This fact makes the decoding, which is outlined as Algorithm \ref{algo:decode}, straightforward.

\begin{algorithm}
\caption{Fibonacci Decoding}
\textbf{Input:} A Fibonacci sequence $\ldots,F_{-3},F_{-2},F_{-1}$ of order $k$, and a binary string $x_1x_2x_3\ldots x_{s-1}x_s1_k$. \\
\textbf{Output:} An element $m$.
\begin{algorithmic}[1]
\STATE If $x_1x_2x_3\ldots x_{s-1}x_s$ is empty, output $0$. END.
\STATE Output $x_{1}F_{-1}+x_{2}F_{-2}+\ldots+x_{s-1}F_{-s+1}+F_{-s}$. END.
\end{algorithmic}
\label{algo:decode}
\end{algorithm}

To illustrate, we consider once again the Gaussian Fibonacci sequence used through Examples \ref{ex:basic}, \ref{ex:represent}, and \ref{ex:unique}.

\begin{example}\label{ex:code}
Consider the Fibonacci sequence
\[\begin{array}{rrrrrrrrr}
\ldots, & -4+4i, & 5+i, & -2-3i, & -1+2i, & 2, & -1-i, & i, & 1,
\end{array}\]
and let $m=-2+3i$. To encode, we initialize $x_{-2}=3$, $x_{-1}=-2$ and obtain $x_{-3}=2$, $x_{-2}=5$, $x_{-1}=0$ after Step 3. We shorthand this sequence as $2,5,0$. Iterations of Step 4 and 5 now yield
\[2,5,0\rightarrow1,0,2,3,1\rightarrow1,0,1,2,0\rightarrow2,0,1,0,1\rightarrow1,0,0,0,1,1,0,1.\]
Thus, $m=-2+3i$ is encoded as $10110000111$. This string can then be decoded as $F_{-1}+F_{-3}+F_{-4}+F_{-8}=1+(-1-i)+2+(-4+4i)=-2+3i$. Table \ref{table:gauss} below gives the Fibonacci code under this sequence for some Gaussian integers. The length of the encoded Gaussian integer $a+bi$ where $-10\leq a,b\leq10$ is illustrated in Figure \ref{fig:length1}. This is extended to $-1000\leq a,b\leq1000$ in Figure \ref{fig:length2}. Here, region of the same color represents Gaussian integer of the same encoded length using basis given in this example.

\begin{table}[H] \centering
\caption{Fibonacci code for some Gaussian integers}
\label{table:gauss}
\begin{tabular}{|c|c|c|c|c|c|}
\hline
& $a=-2$ & $a=-1$ & $a=0$ & $a=1$ & $a=2$ \\
\hline
$b=2i$ & $01100111$ & $00000111$ & $10000111$ & $00010111$ & $1001011$ \\
\hline
$b=i$ & $00100111$ & $10100111$ & $00111$ & $10111$ & $0100111$ \\
\hline
$b=0$ & $110010111$ & $010111$ & $111$ & $0111$ & $0000111$ \\
\hline
$b=-i$ & $100010111$ & $000111$ & $100111$ & $0010111$ & $1010111$ \\
\hline
$b=-2i$ & $010000111$ & $110000111$ & $010100111$ & $110000111$ & $0110010111$ \\
\hline
\end{tabular}
\end{table}

\begin{figure} \centering
\caption{The encoded length of $a+bi$ where $-10\leq a,b\leq10$}
\label{fig:length1}
\includegraphics[width=8cm]{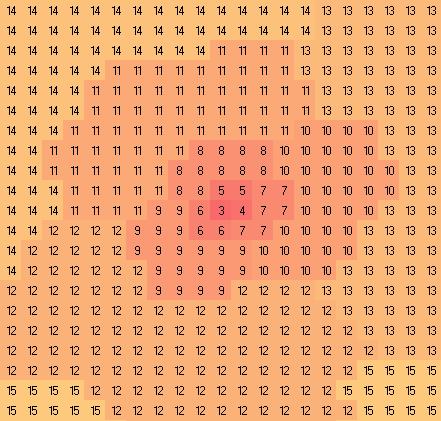}
\end{figure}

\begin{figure} \centering
\caption{The encoded length of $a+bi$ where $-1000\leq a,b\leq1000$}
\label{fig:length2}
\includegraphics[width=10cm]{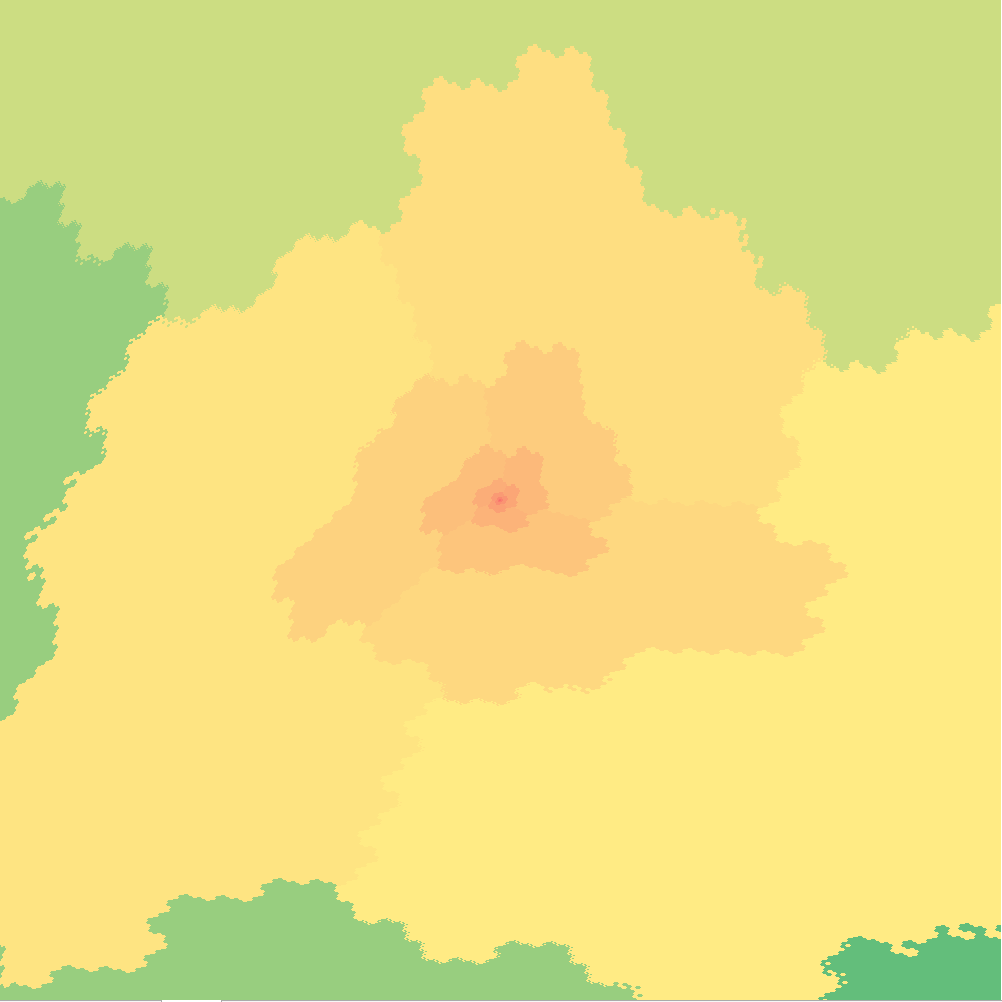}
\end{figure}

\end{example}

Clearly, one may replace $i$ in Example \ref{ex:code} by any other quadratic integer and obtain a Fibonacci coding for the corresponding ring of quadratic integers. Next, we give an example of Fibonacci coding for a lattice.

\begin{example}
Consider the lattice $E_8$ given by
\[E_8=\left\{\mathbf{x}\in\mathbb{Z}^8\cup(\mathbb{Z}+\tfrac{1}{2})^8\mid\textstyle{\sum{x_i}}\equiv0\pmod{2}\right\}.\]
This lattice provides optimal sphere packing and kissing number in 8 dimensions and has many other interesting properties \cite{CS}. We set the basis for $E_8$ as
\begin{align*}
\mathbf{v}_1 & =(2,0,0,0,0,0,0,0) & \mathbf{v}_2 & =(-1,1,0,0,0,0,0,0) \\
\mathbf{v}_3 & =(0,-1,1,0,0,0,0,0) & \mathbf{v}_4 & =(0,0,-1,1,0,0,0,0) \\
\mathbf{v}_5 & =(0,0,0,-1,1,0,0,0) & \mathbf{v}_6 & =(0,0,0,0,-1,1,0,0) \\
\mathbf{v}_7 & =(0,0,0,0,0,-1,1,0) & \mathbf{v}_8 & =(\tfrac{1}{2},\tfrac{1}{2},\tfrac{1}{2},\tfrac{1}{2},\tfrac{1}{2},\tfrac{1}{2},\tfrac{1}{2},\tfrac{1}{2})
\end{align*}
and consider a Fibonacci sequence of order 9 given by
\[\ldots,2\mathbf{v}_2-\mathbf{v}_1,2\mathbf{v}_1,-\mathbf{v}_1-\mathbf{v}_2-\mathbf{v}_3-\mathbf{v}_4-\mathbf{v}_5-\mathbf{v}_6-\mathbf{v}_7-\mathbf{v}_8,\mathbf{v}_8,\mathbf{v}_7,\mathbf{v}_6,\mathbf{v}_5,\mathbf{v}_4,\mathbf{v}_3,\mathbf{v}_2,\mathbf{v}_1.\]
It follows from Theorem \ref{th:unique} that every element of $E_8$ can be written as a sum of elements from this sequence with no 9 consecutive terms. For an example, let $\mathbf{m}=(\tfrac{1}{2},1\tfrac{1}{2},1\tfrac{1}{2},\tfrac{1}{2},\tfrac{1}{2},\tfrac{1}{2},\tfrac{1}{2},\tfrac{1}{2})\in E_8$. Then, $\mathbf{m}=\mathbf{v}_1+2\mathbf{v}_2+\mathbf{v}_3+\mathbf{v}_8=(2\mathbf{v}_2-\mathbf{v}_1)+(2\mathbf{v}_1)+(\mathbf{v}_8)+(\mathbf{v}_3)$ and can be encoded as $00100001010111111111$.
\end{example}

\subsection{Compression Efficiency}

In this subsection, we give an overview for the compression efficiency of the proposed scheme, both theoretically and numerically. We refer interested readers to \cite{S} for a comprehensive survey on the topic. \\

Let $\lambda_k$ be the dominating root of $x^k-x^{k-1}-x^{k-2}-\ldots-x-1=0$, i.e., the $k^{\mathrm{th}}$-order golden ratio. It takes approximately
\[\log_{\lambda_k}(A)+k-1\]
bits to encode a single integer $A$ using the Fibonacci code of order $k$. Unfortunately, due to the erratic nature of higher-order number system, a simple formula cannot be made to approximate the number of bits needed to encode the string $A_1,A_2,\ldots,A_k$ using our proposed algorithm (see also Figure \ref{fig:length2}). Here, we shall attempt to give a very crude estimate. Theorem \ref{th:unique} guarantees a one-to-one correspondence between modules element and sequences of 0's and 1's with no $k$ consecutive 1's. Consider a $k$-dimensional hypercube with sides parallel to the axes, and the origin and $(A_1,A_2,\ldots,A_k)$ are opposite vertices. It takes at least $\log_{\lambda_k}(A_1A_2\ldots A_k)$ bits to assign each integer point in this hypercube a unique binary strings under Fibonacci coding. Thus, if $A_1,A_2,\ldots,A_k$ are sufficiently large, one may estimate the number of bits needed to encode them altogether as
\[\log_{\lambda_k}(A_1A_2\ldots A_k)+k.\]
If $A$ is the geometric mean of $A_1,A_2,\ldots,A_k$, then the per-element average of the number of bits used to write each of the $A_i$'s is
\[\log_{\lambda_k}(A)+1.\]
This can be interpreted that the burden of the $k$-bit suffix is shared across the string entries. \\

While there exist several modern compression techniques \cite{AZ,BINP,MNZB,YTH}, each with different strengths and tradeoffs, we compare multidimensional Fibonacci code with the classical one, using the well-established Huffman code as a benchmark.

We consider string of alphabets from the set $\{1,2,\ldots,n\}$. The per-element average codeword lengths when alphabets are from a uniform distribution are given in Table \ref{table:compare1}. Here, multidimensional Fibonacci codes outperform their classical counterpart in terms of compression across all range and order. This is not surprising since the burden of the suffix is shared across several alphabets in the proposed scheme.


\begin{table}[H] \centering
\caption{Average codeword lengths under uniform distribution}
\label{table:compare1}
\begin{tabular}{|c|c|c|c|c|}
\hline
$n$ & 128 & 256 & 512 & 1024 \\
\hline
\hline
Huffman (= binary encoding) & 7 & 8 & 9 & 10 \\
\hline
\hline
Fibonacci order 2 & 10.609 & 12.054 & 13.525 & 14.881 \\
\hline
Fibonacci order 3 & 10.485 & 11.61 & 12.72 & 13.827 \\
\hline
Fibonacci order 4 & 11.172 & 12.242 & 13.307 & 14.372 \\
\hline
\hline
Multidimensional Fibonacci order 3 & 10.384 & 11.009 & 12.5 & 13.574 \\
\hline
Multidimensional Fibonacci order 4 & 9.828 & 11.555 & 12.224 & 13.310 \\
\hline
\end{tabular}
\end{table}

Next, Table \ref{table:compare2} consider the case when alphabets are drawn from Zipf distribution, and also when Fibonacci codes are used to encode bigrams from the set of alphabets of size 32. Here, the performance of multidimensional Fibonacci codes are slightly behind their classical counterpart. This is probably due to the fact that multidimensional encoder gets a hold-back when ``frequent'' and ``infrequent'' alphabets are in the same group. Finally, we see that the performance of multidimensional codes are comparable to the classical ones when applied to bigrams.


\begin{table}[H] \centering
\caption{Average codeword lengths under Zipf distribution}
\label{table:compare2}
\begin{tabular}{|c|c|c|c|c|c|}
\hline
$n$ & 128 & 256 & 512 & 1024 & $32\times32$ bigram \\
\hline
\hline
Huffman & 5.598 & 6.258 & 6.901 & 7.537 & 8.33 \\
\hline
\hline
Fibonacci order 2 & 5.92 & 6.604 & 7.299 & 7.991 & 9.223 \\
\hline
Fibonacci order 3 & 6.54 & 7.104 & 7.667 & 8.23 & 9.273 \\
\hline
Fibonacci order 4 & 7.449 & 7.985 & 8.519 & 9.052 & 10.046 \\
\hline
\hline
Multidimensional Fibonacci order 3 & 6.617 & 7.203 & 7.807 & 8.409 & 9.494 \\
\hline
Multidimensional Fibonacci order 4 & 7.471 & 8.015 & 8.558 & 9.1 & 10.106 \\
\hline
\end{tabular}
\end{table}

We remark here that while Huffman code outperform Fibonacci code numerically, it works well only when there are finite alphabets and the distribution is known a priori. Fibonacci code is also advantageous in that it provides robustness against errors and works naturally on the set of integers. In this sense, multidimensional Fibonacci code could be of particular interest when integers of unknown sign and magnitude are to be encoded.

\remove{Figure \ref{fig:compare} below shows the length of multidimensional Fibonacci codes of various order when a string of integer $n$ is encoded.

\begin{figure}[H] \centering
\caption{Compare}
\label{fig:compare}
\includegraphics[width=10cm]{Compare}
\end{figure}}

\section{Conclusion}\label{sec:conclusion}

In this paper, we generalize Zeckendorf's theorem to modules. The notion of equivalent sequences allows us to identify elements that can be represented as a sum of entries from a Fibonacci sequence of order $k$. This results in the necessary and sufficient conditions for a Fibonacci sequence of higher order to generate a module. In addition, under certain circumstances the representation is unique, allowing us to establish Fibonacci coding for modules. Future work involves identifying other conditions to which the representation remains unique. Under such environments, one can view Zeckendorf representation as a number system and develop generalized Zeckendorf arithmetic. It would also be interesting to study the proposed coding algorithm from the perspective of data compression and computational complexity.

\section*{Acknowledgments}
The authors wish to thank Eaksit Buathong-iem for his assistance. This work is supported by the Thailand Research Fund under Grant MRG6180192 and the Centre of Excellence in Mathematics, the Commission on Higher Education, Thailand.

\remove{
There also are generalizations of Fibonacci numbers that involves Gaussian integers, complex numbers whose real and imaginary parts are integral. The first generalization is the Fibonacci sequence $GF_n$ on integers that recurrence relation stills the same, with initial condition, for example, $GF_0=i$ and $GF_1=1$. Jordan\footnote{J. H. Jordan, Gaussian Fibonacci and Lucas Numbers, Fibonacci Quarterly 3 (4), 315--318, 1965}, who was studying on this generalization, established some fundamental properties of this sequence. Additionally, $GF_n$ can be written as a form of usual Fibonacci numbers, $GF_n=F_n+iF_{n-1}$. According to this relationship, which simplifies many problems, the study of this sequence is not widely spread.

The more complicated complex Fibonacci sequence is defined on Gaussian integers. One remarkable generalization is given by Harman\footnote{C. J. Harman, Complex Fibonacci Numbers, Fibonacci Quarterly 19 (1), 82--86, 1981}, whose recurrence relations are componentwisely defined. Explicitly, let $m,n$ be two integers, and $G_z$ denotes the Fibonacci number,
$$G_{n+mi} = \begin{cases} G_{(n-1)+mi}+G_{(n-2)+mi},\\
G_{n+(m-1)i}+G_{n+(m-2)i}. \end{cases}$$
However, this again can be written as a form of the usual Fibonacci numbers, that is, $G_{n+mi}=F_nF_{m-1}+iF_{n+1}F_m$.

Berzsenyi\footnote{Berzsenyi, Gaussian Fibonacci Numbers, Fibonacci Quarterly 15 (3), 233-236, 1977} studied the extension using \textit{monodiffricity}, which was introduced by Rufus P. Isaacs. This aspect comes from analytic continuation of function. We say that $f$ is monodiffric at $z$ if
$$\frac{1}{i} \left[ f(z+i)-f(z) \right] = f(z+1)-f(z).$$
As a result, if $f$ is defined for all integer, and is monodiffric, then it is uniquely defined at every Gaussian integer, in other words, it is well-defined. His main works are algebraic identities of this complex Fibonacci sequence.

Another interesting complex Fibonacci numbers is given by I. J. Good\footnote{I. J. Good, Complex Fibonacci and Lucas Numbers, Continued Fractions, and the Square Root of the Golden Ratio, J. Opl. Res. Soc. vol. 43 no. 8, 837--842, 1992}. The number is defined by
$$F_{\xi,n}=\frac{\xi^n-\eta^n}{\xi-\eta},$$
where $\xi$ and $\eta$ are complex numbers with restrictions that $\xi \not= \eta$ and $\xi\eta=-1$. This is the usual Fibonacci numbers when $\xi=\phi$. In this research, the particular case when $\xi+\eta=a+ai$, where $a$ is a positive integer,  is invested. He found that both algebraic identities and number-theoretic properties are similar to that of the usual one.

In considering Fibonacci sequences, we do not specify the initial conditions but instead provide the necessary and sufficient conditions for Zeckendorf's theorem to hold.
We do not identify initial conditions
We do not focus on a particular sequence but instead identify
In order to understand elements that can be written as a sum of
In order to understand nonconsecutive sums of elements from a Fibonacci sequence, we resort to what is called of.
We prove that every element of a free $\mathbb{Z}$-module of order $l$ can be represented as a sum of elements from a Fibonacci sequence of order $k$ with no $k$ consecutive terms when $l+1\leq k$.
In particular, our theorem generalizes the Zeckendorf's theorem and provide a Zeckendorf-like representation for Gaussian integers and Eisenstein integers.
Extra conditions can be given so that Zeckendorf-like representation for elements in the module is unique.

A number system is a representation system for numbers. We give a generalized definition for a number system of a ring.
\begin{definition}
Let $(b_r)_{r\in\mathbb{Z}}$ doubly infinite sequence $\theta$ be an element of a ring $\mc{R}$, and $\mc{A}$ a complete residue system modulo $\theta$ over $\mc{R}$. We say that $(\theta,\mc{A})$ is a {\it number system} if every element in $\mc{R}$ can be expressed as
$$a_0+a_1 \theta + a_2 \theta^2+\cdots+a_k \theta^k,$$
where $k$ is finite, $a_j\in \mc{A}$ for all $j=0,1,2,\ldots,k$ and $a_k\not=0$.
\end{definition}

In this section, we describe Fibonacci coding for any free $\mathbb{Z}$-modules and outline the corresponding encoding and decoding algorithm. It follows from Theorem \ref{th:unique} that every element of a free $\mathbb{Z}$-module of rank $k-1$ can be uniquely represented using a binary string with no $k$ consecutive 1's. Given such a string, one may suffix it with $k-1$ 1's and hope to have naturally generalized Fibonacci coding to modules. This coding scheme, however, is not uniquely decodable. For example, suppose that $k=3$ and 11110111 is received, then it is not possible to tell if 111 10111 or 1111 0111 was sent. The method from \ will be used to deal with this shortcoming.

Our scheme, however, is not optimal in terms of data compression. For example, when $k=3$, no codeword ends with ``$110111$''. This unused suffix makes code longer. Compare to the standard Fibonacci coding of higher order for integers \cite{AF,WKBPS}, ours differs in that we do not have to precompute sums of the Fibonacci numbers. Indeed, modules may not permit natural ordering, and in such case it may not be possible to bound a module element between two Fibonacci sums. We remark also that the resulting codewords from our coding scheme depend on the underlying Fibonacci sequence. This is analogous to how different bases lead to different representation.}

\end{document}